\DeclareMathOperator*{\argmax}{argmax} 
\DeclareMathOperator*{\argmin}{argmin}
\newmdtheoremenv{theorem}{Theorem}
\newtheorem{remark}{Remark}
\newmdtheoremenv{definition}{Definition}
\newmdtheoremenv{proposition}{Proposition}[theorem]
\newcommand{\RN}[1]{%
\textup{\uppercase\expandafter{\romannumeral#1}}%
}
\newenvironment{proof_method} {\begin{proof}[Proof method]} {\end{proof}}
\renewcommand{\O}[1]{\ensuremath{\mathsf{#1}}} 
\providecommand{\keywords}[1]
{
  \small	
  \textbf{\textit{Keywords---}} #1
}
\title{Randomized Transport Plans via Hierarchical Fully Probabilistic Design}
\author{%
Sarah Boufelja Y.$^{1}$\thanks{Corresponding author.} \quad Anthony Quinn$^{1,2}$ \quad Robert Shorten$^1$  \\
$^1$Imperial College London, Dyson School of Design Engineering \\ $^2$Trinity College Dublin, School of Engineering \\
\texttt{\{s.boufelja21,a.quinn,r.shorten\}@imperial.ac.uk}
}
\begin{document}
\maketitle
\section*{Abstract}
An optimal {\em randomized\/} strategy for design of balanced, normalized mass transport plans is developed. It replaces---but specializes to---the {\em deterministic\/}, regularized optimal transport (OT) strategy, which yields only a certainty-equivalent plan. The incompletely specified---and therefore uncertain---transport plan is acknowledged to be a random process. Therefore, hierarchical fully probabilistic design (HFPD) is adopted, yielding  an optimal hyperprior supported on the set of possible transport plans, and consistent with prior mean constraints on the marginals of the uncertain plan.   This Bayesian resetting of the design problem for transport plans ---which we call HFPD-OT---confers new opportunities. These include (i) a strategy for the  generation of a random sample of {\em joint\/} transport plans; (ii)  randomized {\em marginal contracts\/} for individual source-target pairs; and (iii) consistent measures of uncertainty in the plan and its contracts.  An application in  fair market matching is outlined, in which HFPD-OT enables the recruitment of a more diverse subset  of contracts---than is possible in  classical OT---into the delivery of an expected plan.  

\keywords{\textit{Optimal transport, Bayesian hierarchical modelling, Fully probabilistic design, Convex optimization,  Algorithmic fairness, Market matching}}

\section{Main Contributions}\label{sec:maincontribs}

\noindent
Optimal transport (OT) refers to the classical design of a {\em deterministic\/} transport  plan, $\pi$, for taking a unit\footnote{Throughout this paper, we address only the balanced, normalized transport problem.} mass—distributed across a source domain, $\mathbb{\mathbb{\Omega}}_{X}$—and redistributing it across a target domain, $\mathbb{\mathbb{\Omega}}_{Y}$. The transport plan is expressed as an unknown, deterministic, joint distribution, $\pi$, with support in  $\mathbb{\mathbb{\Omega}}_{X} \times \mathbb{\mathbb{\Omega}}_{Y}$. The distributed source and target are therefore the marginals of $\pi$, and are specified {\em a priori\/} by  $\mu_0$ and $\nu_0$ on $\mathbb{\mathbb{\Omega}}_{X}$ and $\mathbb{\mathbb{\Omega}}_{Y}$, respectively. Consequently, $\pi$ is confined to the space, $\mathbb{\Pi}(\mu_0,\nu_0)$, of distributions on $\mathbb{\mathbb{\Omega}}_{X} \times \mathbb{\mathbb{\Omega}}_{Y}$, with $\mu_0$ and $\nu_0$ as its marginals. An optimal choice, $\pi^o$, of $\pi$—called the OT plan—is achieved by minimizing the expected value, under $\pi$, of a pre-specified cost of transport, ${\mathsf C}(x,y)$, from $\mathbb{\mathbb{\Omega}}_{X}$ to  $ \mathbb{\mathbb{\Omega}}_{Y}$. 

\noindent
In this paper, we reformulate the design of transport maps in the {\em Bayesian\/} (i.e.\ fully probabilistic) way. In particular, deterministic optimization---yielding $\pi^o$—is replaced by the hierarchical fully probabilistic design (HFPD) of an optimal  {\em randomized decision-making strategy}, $\pi \sim  \mathsf{S}^o$ (i.e.\ a hyperprior), for choosing $\pi $. This approach recognizes that the unknown transport plan, $\pi$, is a (generally nonparametric) random 
process. We therefore  equip it with a prior, $\mathsf{S} (\pi | K)$, where $K$ denotes marginal (mean) knowledge constraints which will be detailed in the sequel. Following the axioms of FPD at this hierarchical level (i.e.\ HFPD), we equip the space, ${\mathbb S}_K$, of $\O{S}$---being the randomized strategy for choosing  the transport plan, $\pi$---with an appropriately formulated  loss function, and we minimize the expected value of the latter under $\mathsf{S}$. This yields the optimal randomized strategy, $\mathsf{S}^o (\pi | K)$, for choosing $\pi$, being  also the optimal  hyperprior for uncertain $\pi$. We show that  this procedure is equivalent to minimization of a Kullback-Leibler divergence (KLD), leading to a Gibbs form for   $\mathsf{S}^o (\pi | K)$:
\begin{equation}\label{eq:hyperprior_init}
\mathsf{S}^o (\pi | K) \propto \mathsf{S}_{\O{I}}(\pi | K) 
e^{-{\mathsf{D}_{\mathsf{KL}}}(\pi || \pi_{\O{I}})}
 e^{-\lambda_1^o {\mathsf{D}_{\mathsf{KL}}}(\mu || \mu_0)}
e^{-\lambda_2^o {\mathsf{D}_{\mathsf{KL}}}(\nu || \nu_0)} \in {\mathbb S}_K.
\end{equation}
Here, $\mu$ and $\nu$ are the uncertain (i.e.\ random)  marginals of the random transport plan, $\pi$. The KLDs, ${\mathsf{D}_{\mathsf{KL}}}(\cdot || \cdot)$, act as Gibbs energies.  Meanwhile, $\mathsf{S}_{\O{I}}$ and $\pi_{\O{I}}$ are the freely but necessarily {\em pre-specified\/} zero-loss choices of $\mathsf{S}$ and $\pi$, respectively, referred to as the {\em ideal\/} or target choices. 

\noindent
By resetting OT as a problem of Bayesian decision making via HFPD, we achieve the following principal goals:
\begin{itemize}
\item[(i)] The deterministic, regularized  OT choice, $\pi^o$, obtained via constrained optimization at the base level of modelling, $(x,y) \sim \pi$, is replaced by an optimal generator of randomized plans (i.e.\ a randomized strategy for designing transport plans, $\pi$) at the hierarchical level of complete modelling, $\pi \sim \mathsf{S}^o (\pi | K)$.
 \item[(ii)] In the parametric case, in which the support set, ${\mathbb{\mathbb{\Omega}}_{X}}\times {\mathbb{\mathbb{\Omega}}_{Y}}$, is finite, we can compute optimal univariate (marginal and/or conditional) distributions, $\pi_{i,j} \sim \mathsf{S}^o_{i,j}$, for  modelling and randomization of the transport {\em contract}, $\pi_{ij}\in(0,1)$,  from the {\em agent\/} (at) $x_i$ to the agent (at) $y_j$.
 \item[(iii)] In line with all Bayesian decision-making strategies, we can summarize $\pi\sim\mathsf{S}^o (\pi | K)$ via a certainty-equivalent (CE) transport plan, $\hat{\pi} \in \mathbb{\Pi}_K$---such as its expected or maximally probable value---and equip this with a summary of our uncertainty in $\pi$ (e.g.\ via the Bayesian standard intervals for the contracts, $\pi_{i,j} \sim \mathsf{S}^o_{i,j}$). 
\end{itemize}
By equipping transport plans with an optimal hyperprior from which candidate plans can be generated, we are able to encode our prior knowledge and our ranking of preferences. This HFPD resetting of OT can have significant impact in applications. We consider one such  application, in algorithmic fairness. Specifically, we address the problem of labour market matching, in which fairness is induced  by optimally randomizing the matching strategy (a transport plan) via HFPD-OT, thereby increasing a diversity index among contracts. 

\section{Introduction to transport plan design and optimal transport}\label{intro}
Optimal Transport (OT) techniques have received increasing attention in the past decade, in a wide range of domains such as machine learning and generative adversarial learning \citep{arjovsky2017wasserstein}, domain adaptation \citep{OT_Domain_Adaptation}, image processing and watermarking \citep{OT_Watermarking}, hallucination detection in neural translation machines \citep{guerreiro2023optimal}, {\em etc}. In addition to traditional applications in economics and market matching \citep{galichon}, fluid mechanics and diffusion processes \citep{OT_fluid_dynamics}, it has also been used to perform sampling and Bayesian inference \citep{ELMOSELHY20127815}. 
\\\\ 
OT is concerned with the least costly transport plan (in expectation) between a source and a target probability measure. The unregularized OT plan induces a natural distance in the space of probability measures (the Kantorovitch-Rubinstein distance) \citep{Villani2008OptimalTO}, introducing a rich topological structure by lifting key geometric properties associated with the ground metric to the space of probability measures \citep{Villani2008OptimalTO,MAL-073}. For example, if the ground space is Euclidean, concepts like gradient, barycentre and convexity are naturally extended to the space of probability measures. 

Notwithstanding the wide range of applications, the classical formalism of OT confines it to a purely deterministic setting, which regards the transport plan as a crisp object and assumes perfect knowledge of the marginals (Figure \ref{fig:fig21}). It fails to model and (critically) translate uncertainty in the marginals to uncertainty in the design of transport plans. In this regard, classical OT is an instance of certainty-equivalence (CE) decision making, which produces myopic transport strategies that do not account for the uncertain and random nature of many real systems. One might think that recasting the classical OT problem in terms of robust optimization might address 
 these  issues. A robust optimization formulation relies on a deterministic, unknown but bounded description of the uncertainty in the marginals \citep{robust_optimization}. 
 Such a  design choice may be overly conservative: it indeed considers all possible outcomes in the uncertainty set, but may assign non-negligible weights even to plans that are highly improbable. Furthermore, the robust design is not equipped with a quantifier of the intrinsic uncertainty of the transport plan.  \\\\
In this manuscript, we propose the HFPD-OT  approach to the design of uncertain transport plans. It departs from the conventional OT setting by considering the transport plan as a random process. Consistent hierarchical Bayesian modelling endows the uncertain plan with its own {\em hyperprior} (Figure \ref{fig:fig22}). Its optimal choice provides a {\em randomized strategy\/} for choosing transport plans in the space of plans consistent with prior-imposed knowledge constraints on its marginals. It also acts as  a generative model for  random sampling of  transport plans.  By treating transport plans as random processes, we effectively recast the transport design problem as one of inference. This contrasts with the OT literature, which is only concerned with deterministic optimization strategies for choosing deterministic plans. As we will see in the literature review, next, the tools provided by HFPD-OT---intended for modelling and reasoning about uncertainty in  transport plans---are  not available in the classical OT setting.


\subsection{Approaches to modelling uncertainty in OT}\label{lit_review}
There are precedents in eliciting and processing uncertainty in OT, but they are generally couched in terms of base-level modelling, and not in terms of the hierarchical Bayesian approach developed here. Specifically, \begin{enumerate*} [label=(\roman*)] 
\item our  method is primarily concerned with the design of a fully probabilistic model over the space of transport plans; \item as such, the transport plan is modeled as a (generally nonparametric) process endowed with its own (hyper)prior;  and \item we rely on randomization techniques for choosing plans, in contrast to  existing methods which are mainly based on deterministic optimization techniques. 
\end{enumerate*}

Copulas \citep{sklar} are historically among the first methods proposed for the design of multivariate distributions with arbitrary, but perfectly known marginals. Other techniques relaxed this assumption to address situations where exactly one marginal is uncertain. This is the case in \citep{Goodman1953EcologicalRA}, for instance, where the authors model the uncertainty in one marginal with a Gaussian noise. In ecological inference (a case of parametric transport design on a finite support), \citep{wakefield_ecological_inference} studied the case where one marginal is uncertain, adopting a hierarchical multinomial-Dirichlet-based model.  We highlight two distinctions in our work: \begin{enumerate*} [label=(\roman*)] \item we do not impose a parametric constraint in general, and we allow for uncertainty in both marginals; and  \item the authors of the earlier paper pursue markedly different statistical inference objectives from OT \end{enumerate*}. 
\\\\
Interestingly, the connection between ecological inference and OT was not established until later, in \citep{pmlr-v97-frogner19a}, where the authors extended the previous model and studied the case where both marginals are uncertain. The questions we address in this paper again differ from those in  \citep{pmlr-v97-frogner19a}  in the following ways: \begin{enumerate*} [label=(\roman*)] \item they solve a base-level MAP optimization problem using a Bregman projection method, once again recovering a certainty-equivalent OT plan, whereas our primary goal is to depart from such a certainty-equivalence setting and design an optimal hierarchical Bayesian model from which random transport plans can be generated and used {\em in lieu\/} of an OT plan. If required (as we will see), the {\em expected\/} plan takes the place of the MAP plan as the Bayesian minimum-risk decision (i.e.\ estimate) of the uncertain plan, with asymptotic convergence  to the  MAP plan; and 
\item the derivations in \citep{pmlr-v97-frogner19a} rely on parametric and structural assumptions, mainly full separability. Separability is a strong assumption in that it excludes the modelling of  rich structures and interactions that may exist in real-world data. We do not require these assumptions in our hyperprior, and we leave it to the designer---via the specification of ideal designs (to be explained in the sequel)---to impose any relevant structural requirements. \end{enumerate*} 
\\\\
Uncertainty in the cost matrix in the finite case is considered by \citep{pmlr-v157-mallasto21a}. Given a finite sample of these cost matrices, they model the induced  uncertainty in the (finitely supported) OT plan. They do not allow for any uncertainty in the marginals, and so their distribution over OT plans is geometrically constrained to the OT polytope. They impose various standard parametric priors over this set, without any optimality claims for them. Our work significantly extends this treatment by modelling uncertainty in the marginals, so that  our hierarchical model has support in the geometrically {\em unconstrained\/} space of  transport plans, and  extends to the nonparametric setting of continuously  supported plans. Importantly---and in contrast to \citep{pmlr-v157-mallasto21a}---we do not impose an optimality constraint on the base-level plans themselves, but, rather, on the hierarchical (generative) distribution of (all possible) plans, $\O{S}^o(\pi | K)$  (\ref{eq:hyperprior_init}). In this way, the random generator of the plans, $\O{S}^o(\pi | K)$, is optimal, and not the uncertain transport plan, $\pi$, itself (although subsequent projections of $\O{S}^o (\pi | K )$ can yield optimal Bayesian decisions about $\pi$, in the conventional manner of Bayesian decision-making). The main contribution of our work is to {\em deduce\/} this optimal hyperprior for transport design (\ref{eq:hyperprior_init}) via the foundational methods of fully probabilistic design (FPD) \citep{KARNY2012105}. We show how this HFPD-OT hyperprior concentrates to the classical regularized OT solution as uncertainty in the marginals diminishes (\ref{eq:base_level_eot}). 
\\\\ 
An interesting line of work on unbalanced OT (UOT) in \citep{SEJOURNE2023407} relaxes the strict marginal constraints $\mathbb{\Pi}(\mu_0,\nu_0)$, and replaces them by a soft penalization, using Kullback-Leibler balls centred on the nominal marginals (as we do in this paper). This ensures feasibility of the UOT problem, allowing transport between unequal (non-probability) measures (which we do not allow in our work). Once again, their solution involves a base-level deterministic optimization.  \\\\ 
Finally, entropy-regularized OT (EOT) \citep{cuturi2013sinkhorn} is a foundational work on deterministic OT that will be recovered asymptotically via HFPD-OT. In EOT,  the classical OT linear program is relaxed by means of an entropy regularization term, yielding a strictly convex problem, which is amenable to efficient matrix scaling algorithms, notably Sinkhorn-Knopp~\citep{cuturi2013sinkhorn}. In our own recent paper \citep{sby2022fully},  we formally establish the relationship between  base-level EOT under the usual deterministic marginal constraints---therefore yielding a certainty-equivalent (i.e.\ singular) OT plan, $\pi^o$, in the conventional manner---and fully probabilistic design (FPD). In this paper, our goal  is to extend the base-level EOT setting by deriving an optimal hyperprior, $\O{S}^o(\pi | K)$ (\ref{eq:hyperprior_init}), over the set of uncertain transport plans.

\subsection{Notational conventions, technical preliminaries for non-hierarchical OT, and outline of the paper }\label{conventions}
In the following, we will review the key mathematical conventions used throughout the paper. Specifically, all probability measures will be referred to as (probability) distributions. The context will make clear whether the distribution in question is a probability density function (pdf) or a probability mass function (pmf). A superscript $o$ refers to \textit{optimal} distributions, e.g. $\mathsf{S}^{o}$, whereas a subscript $\mathsf{I}$ designates \textit{ideal} distributions, e.g. $\mathsf{S}_{\mathsf{I}}$. Moreover, all fixed and prior-elicited quantities are referred to using a subscript $0$ ($\mu_{0}$, $\nu_{0}$, {\em etc}.). Sets will be denoted by a blackboard  typeface (e.g. $\mathbb{\Omega}_{X}, \mathbb{\Omega}_{Y}, \mathbb{M}$, {\em etc}.), and deterministic functionals will be denoted by a math {\em sans serif\/} typeface (e.g. $\O{S}$, $\O{C}$, $\O{D}$, {\em etc}.). Instantiated distributions will be assigned a math calligraphic typeface ($\mathcal{U}$, $\mathcal{N}$, {\em etc}).
\begin{itemize}
\item The conventional non-hierarchical---which we call the base-level---probability space (triple) is ($\mathbb{\mathbb{\Omega}}$, $\mathcal{F}$, $\mathcal{P}$), where $\mathbb{\mathbb{\Omega}}$ is the sample space, $\mathcal{F}$ denotes the ($\sigma$-)algebra of measurable subsets of $\mathbb{\mathbb{\Omega}}$, and $\mathcal{P}$ is a probability measure defined on $\mathcal{F}$. 
\item Consider two random variables (rvs), 
$X$: $\mathbb{\mathbb{\Omega}} \mapsto \mathbb{\mathbb{\Omega}}_{X}$ and $Y$: $\mathbb{\mathbb{\Omega}} \mapsto \mathbb{\mathbb{\Omega}}_{Y}$, whose images, $\mathbb{\mathbb{\Omega}}_{X}$ and $\mathbb{\mathbb{\Omega}}_{Y}$, are, respectively, compact subsets of topological spaces of unspecified dimensions. In the standard setting of optimal transport (OT)~\citep{Villani2008OptimalTO,MAL-073}, their  marginal distributions under $(\mathbb{\mathbb{\Omega}}, \mathcal{F}, \mathcal{P})$ are prior-specified (i.e.\ {\em known\/}) to be $\mu_0 \in \mathbbm{P}(\mathbb{\mathbb{\Omega}}_{X})$ and $\nu_0 \in \mathbbm{P}(\mathbb{\Omega}_{Y})$, respectively, while their joint distribution,  $\pi \in \mathbbm{P}(\mathbb{\mathbb{\Omega}}_{X} \times \mathbb{\mathbb{\Omega}}_{Y})$, is {\em unknown}, and is the subject of design. 
\item The reference measure in $(\mathbb{\mathbb{\Omega}}_{X} \times \mathbb{\mathbb{\Omega}}_{Y})$ is denoted by $\lambda (x,y)$. Depending on the context, $\lambda$  interchangeably denotes the Lebesgue measure (in the continuous case) or the counting measure (in the discrete case). $\pi$, $\mu_0$ and $\nu_0$ are  absolutely continuous \textit{w.r.t.}\ $\lambda$.  We do not distinguish notationally between a probability measure and its Radon-Nikodym derivative \textit{w.r.t.}\ to $\lambda$, e.g.\  $\frac{d\pi}{d\lambda} \equiv\pi$, {\em etc}., and we refer to all as distributions.  
\item The prior-specified marginal constraints, $\mu_0$ and $\nu_0$, constrain $\pi$ to the following knowledge-constrained set:
\[
\label{eq:Kset}
\mathbb{\Pi}(\mu_0, \nu_0) \equiv\Bigl\{\pi \in \mathbbm{P}(\mathbb{\Omega}_{X} \times \mathbb{\Omega}_{Y}) \;|\; \int_{\mathbb{\Omega}_{y}}\pi d\lambda(y) \equiv\mu_0, \;\;
\int_{\mathbb{\Omega}_{X}}\pi d\lambda(x) \equiv\nu_0
\Bigr\}
\]
\item Consider an alternative distribution, $\zeta \in \mathbbm{P}(\mathbb{\Omega}_{X} \times \mathbb{\Omega}_{Y})$.  The Kullback-Leibler divergence (KLD) of $\zeta$ to $\pi$ is:
\begin{equation}
\mathsf{D}_{\mathsf{KL}}(\pi|| \zeta) \equiv
\left\{
\begin{aligned}
\int_{\mathbb{\Omega}_{X}\times\mathbb{\Omega}_{Y}} \pi(x,y)\log\Bigl(\frac{\pi(x,y)}{\zeta(x,y)}\Bigl)d\lambda(x,y) & \;\; \text{if} \;\; \pi \ll \zeta,  \\
+ \infty & \;\; \text{otherwise,} \\
\end{aligned} \right.
\end{equation}
where $\pi \ll \zeta$ indicates the absolute continuity (a.c.) of $\pi$ \textit{w.r.t.}\ $\zeta$.
\item If $\O{q}$ is an integrable function with domain $\mathbb{\Omega}_{X} \times \mathbb{\Omega}_{Y}$, then its expectation \textit{w.r.t} $\pi$ is defined as 
\[
\O{E}_{\pi}\left[\O{q}\right] \equiv\int_{\mathbb{\Omega}_{X} \times \mathbb{\Omega}_{Y}} \O{q}(x,y)\pi(x,y)d\lambda(x,y) \; < \infty
\]
\item 
 $K$---in, for example, \ $\O{S}(\pi | K)$---is  Jeffreys' notation~\citep{Jeffreys1939-JEFTOP-5}, encoding the knowledge which acts as a condition on a probability model. It effectively confines $\O{S}$ to a particular knowledge-constrained set, ${\mathbb S}_K$. Its specific meaning will be defined in context, at both the base level and hierarchical level, as appropriate.
 \item $\O{supp(\mu)}$ denotes the support of the distribution, $\mu$.
\item $<\!\!\cdot,\!\cdot\!\!>$ denotes the standard inner product between vectors in a Euclidean space. When required, it will be generalized to the canonical duality pairing in the infinite dimensional setting.
\item $\succeq$ denotes an element-wise comparison between vectors $u, v \in \mathbb{R}^{p}$: $u \succeq v \iff u_{i} \geq v_{i}, \;\; \forall i \in \{1, 2, \dots ,p\}$. Other relational operators between vectors should also be understood element-wise.
\item The indicator function of a set $\mathbb{A}$ is:
\[
\chi_{\mathbb{A}}(x) \equiv
\left\{
\begin{aligned}
1  & \;\; \text{if} \;\; x \in \mathbb{A},  \\
0 & \;\; \text{otherwise.} \\
\end{aligned} \right.
\]
\item $\delta_{x_{0}}(x)$ denotes the distribution that is singular at $x=x_0$, being the Dirac delta-function w.r.t.\ Lebesgue measure in the case of continuous $x$. 
\item $\Delta_q$, \;$1 \leq q<\infty$, denotes the open probability simplex of dimension $q$. If $q>1$ and $x\in \Delta_q$, then the support of the conditional distribution, $\O{F}(x_{\backslash i} | x_i)$, is denoted by $(1-x_i)\Delta_{q-1}$, \;$0<x_i < 1$. 
\end{itemize}
The outline of the paper is as follows. In Section~\ref{HFPD-OT}, we state the mathematical problem and establish the duality result in the infinite dimensional case, hence deriving a formal characterization of the optimal Bayesian hyperprior \eqref{eq:hyperprior_init}. Section~\ref{parametric_hyperprior}  introduces the parametric hyperprior, and we provide a descriptive analysis in a low dimensional setting in Section~\ref{descriptive_analysis}. Meanwhile,  Section~\ref{Design_Kantorovitch_Potentials} proposes an algorithm for the computation of the optimal Kantorovitch potentials in this parametric setting.  In Section~\ref{simulations}, we apply the HFPD-OT formalism to a market matching problem in order to improve a contract diversity index, before closing the paper with our main conclusions in Section~\ref{conclusions}. 

\begin{figure}[h]  \centering
\begin{subfigure}{0.85\linewidth}
    \centering
        \includegraphics[width=0.5\linewidth]{ 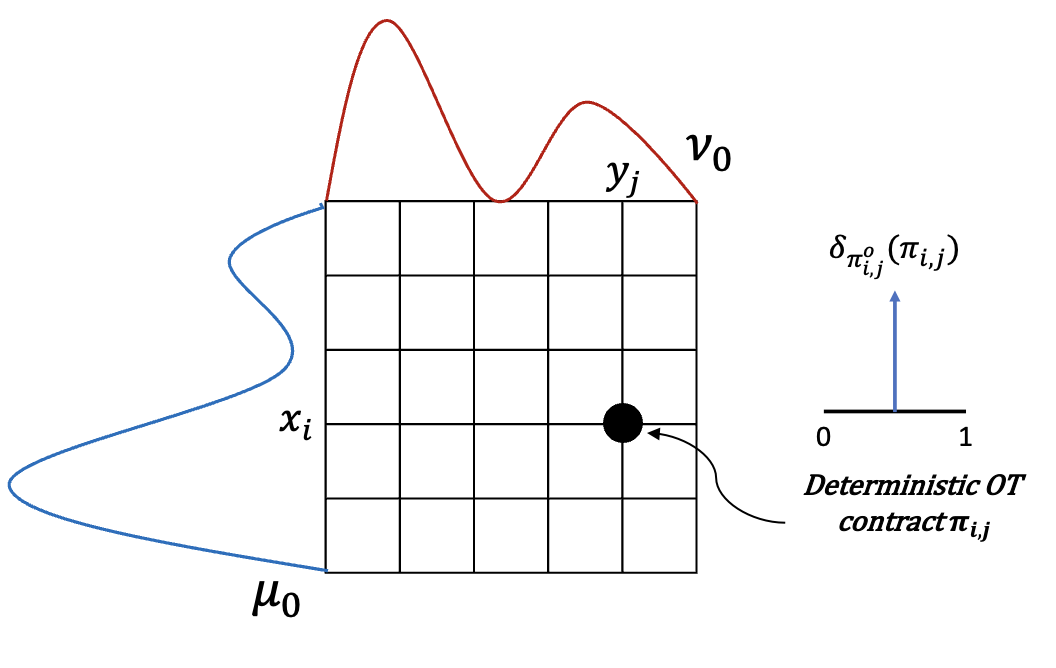}
        \caption{
        In the conventional base-level OT  setting, the transport plan, $\pi$, is deterministic, and so all the contracts, $\pi_{i,j}\in [0,1]$, are as well. Their respective (marginal) distributions are therefore singular at $\pi^{o}_{i,j}$, where $\pi^{o}$ denotes the OT plan \eqref{eq:OT_solution}.}
        \label{fig:fig21}
    \end{subfigure}
    \hspace{0.5cm}
   \begin{subfigure}{0.85\linewidth}
   \centering
        \includegraphics[width=0.5\linewidth]{ 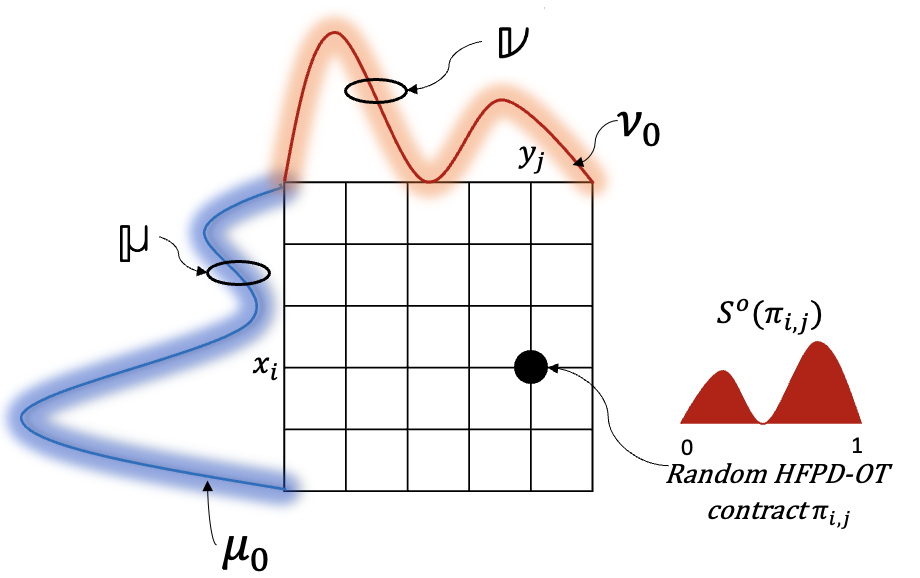}
        \caption{HFPD-OT acknowledges that uncertainty in $\mu$ and $\nu$ induces uncertainty in the transport plan, $\pi$, and therefore in the individual contracts, $\pi_{i,j}$. Hierarchical fully probabilistic design (HFPD)  endows $\pi$ with an optimal hyperprior, $\pi\sim\O{S}^o(\pi | K )$, whose marginals, $\mathsf{S}^{o}(\pi_{i,j})$, are the distributions of the contracts, $\pi_{i,j}\in [0, 1]$.}
        \label{fig:fig22}
    \end{subfigure}
   \begin{subfigure}{0.85\linewidth}
   \centering
        \includegraphics[width=0.5\linewidth]{ 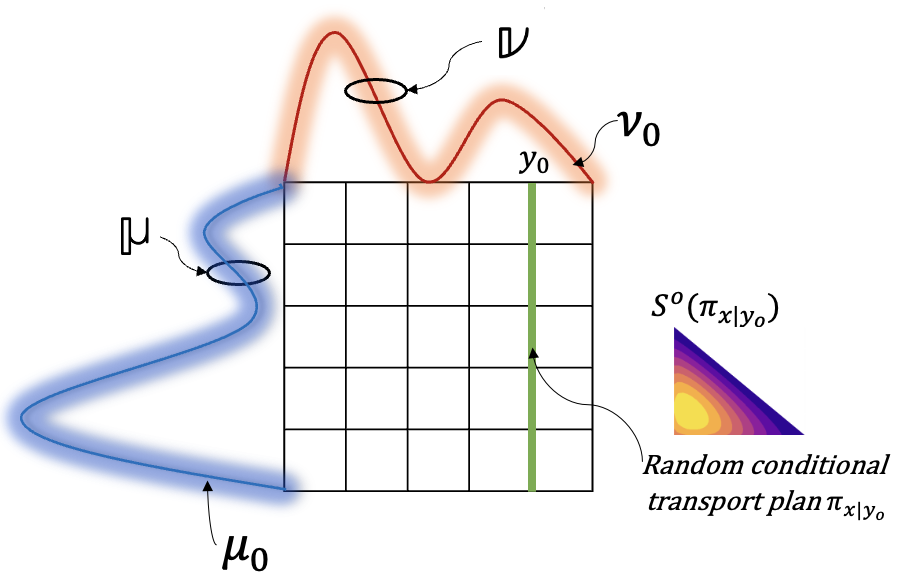}
        \caption{For a fixed $y_{0} \in \mathbb{\Omega}_{Y}$, HFPD-OT also acknowledges the conditional plans, $\pi_{x|y_{0}}$, as random processes, again equipped with their own optimal distributions, $\mathsf{S}^{o}(\pi_{x|y_{0}})$, consistent with $\pi\sim\O{S}^o(\pi | K )$.}
        \label{fig:fig23}
    \end{subfigure}
       \caption{Schematics which distinguish conventional base-level (i.e.\ deterministic) OT, in (a), from HFPD-OT, in (b) and (c).  For ease of illustration, we consider the finite dimensional specialization in Section~\ref{parametric_hyperprior}, but the ideas extend to the continuous setting. In HFPD-OT, uncertainty in the marginals, $\mu_0$ and $\nu_0$, induce uncertainty in    the joint ($\pi$) and conditional ($\pi_{x|y_{0}}$) transport plans, as well as in the individual contracts ($\pi_{i,j}$). All are optimally modeled in probability (i.e.\ they are random processes or variables, per the setting).  Here, a {\em contract}, $\pi_{i,j}\in [0, 1]$---see (a) and (b)---refers  to the normalized quantity  of resource (information,  assets, stock, {\em etc}.) transported  from agent $x_{i} \in \mathbb{\Omega}_{X}$ to agent $y_{j} \in \mathbb{\Omega}_{Y}$, in  delivering the (global) transport plan, $\pi$. }
       \label{fig:fig2}
\end{figure}

\section{Hierarchical Fully Probabilistic Design for (Optimal) Transport: HFPD-OT}\label{HFPD-OT}
The classical OT setting contemplates the transport plan as a purely deterministic object and frames the OT problem solely from an optimization perspective (Figure \ref{fig:fig21}). More precisely, FPD-OT \citep{sby2022fully}, which is a generalization of the classical EOT problem \citep{cuturi2013sinkhorn}, is built upon the following optimization problem:
\begin{equation}\label{eq:general_fpd_ot}
   \pi^{o}_{\O{OT}, \epsilon, \phi}(x,y | K) 
= \argmin_{\pi \in \mathbb{\Pi}(\mu_{0}, \nu_{0})}\mathsf{D}_{\mathsf{KL}}(\pi(x,y)||\pi_{\O{I}}(x,y | K)),
\end{equation}
where the base-level ideal design, $\pi_{\O{I}},$ with support in $\mathbb{\Omega}_{X} \times \mathbb{\Omega}_{Y}$, is defined as the following extended Gibbs kernel:
\begin{equation}\label{general_gibbs_kernel}
   \pi_{\O{I}}(x,y | K) \propto \exp\Bigl(\frac{-\mathsf{C}(x,y)}{\epsilon}\Bigr)\phi(x,y).
\end{equation}
$\mathsf{C}: \mathbb{\Omega}_{X} \times \mathbb{\Omega}_{Y} \rightarrow \mathbbm{R}^{+}$ denotes a continuous cost function, $\epsilon > 0$ is a smoothness (i.e.\ regularizing) parameter, and $\phi$ is a fixed distribution, which may be used to encode additional structural preferences in the design of the OT plan. $K$ (Section~\ref{conventions}) denotes the deterministic, domain-specific knowledge constraints, consisting of external or side-information gathered from the environment, and any other prior knowledge related to the problem being modeled. In the conventional base-level (i.e.\ deterministic) EOT setting, we impose these knowledge constraints in the form of deterministic marginal constraints $\mathbb{\Pi}(\mu_{0}, \nu_{0})$ \eqref{eq:Kset}. Importantly, when $\phi$ is instantiated as the uniform distribution, $\mathcal{U}$, with support in $\mathbb{\Omega}_{X} \times \mathbb{\Omega}_{Y}$, the resulting EOT solution converges in the $\Gamma$-sense to the Monge-Kantorovitch solution \citep{carlier2017convergence}: 
\begin{equation}\label{eq:OT_solution}
      \pi^{o}_{\O{OT}, \epsilon, \mathcal{U}}(x,y | K)  \xrightarrow[]{\epsilon \rightarrow 0} \pi^{o}_{\mathsf{OT}}(x,y | K) \equiv\argmin_{\pi \in \mathbb{\Pi}(\mu_{0}, \nu_{0})}\int_{\mathbb{\Omega}_{X}\times \mathbb{\Omega}_{Y}}\O{C(x,y)}\pi(x,y)d\lambda(x,y).
\end{equation}
In the sequel, we will denote the base-level OT solution simply by $\pi^{o}(x,y)$, and will not distinguish between EOT and OT solutions, unless required by the context. \\\\
In contrast to conventional, base-level OT---in which the transport plan, $\pi(x,y)$, is a deterministic object---HFPD-OT acknowledges that $\pi(x,y)$ is uncertain  (i.e.\ a random process), and  needs to be equipped with an appropriate hierarchical probability model (i.e.\ triple) (Figure \ref{fig:fig22}). Next, we deduce this optimal model, $\pi\sim\O{S}^o$ \eqref{eq:hyperprior_init}, using  the axiomatic Bayesian decision-making framework of hierarchical fully probabilistic design (HFPD).  

\subsection{The HFPD formulation of optimal transport}
Consider a probability model  in the hierarchical measurable space,
$(\mathbb{\Omega}_{\O{H}}, \mathcal{F}_{\mathbb{\Omega}_{\O{H}}})$, where $\mathbb{\Omega}_{\O{H}} \equiv\mathbb{\mathbb{\Omega}}_{X} \times \mathbb{\mathbb{\Omega}}_{Y} \times \mathbbm{P}(\mathbb{\mathbb{\Omega}}_{X} \times \mathbb{\mathbb{\Omega}}_{Y})$ and $\mathcal{F}_{\mathbb{\Omega}_{\O{H}}}$ is the $\sigma$-algebra of measurable sets in $\mathbb{\Omega}_{\O{H}}$. 
Then, $\pi(x,y) \in \mathbbm{P}(\mathbb{\Omega}_{X} \times \mathbb{\Omega}_{Y}) $ is a random process endowed with its own distribution, called the hyperprior, and denoted by $\mathsf{S}(\pi|K)$. The notation $\pi \sim \mathsf{S}(\pi|K)$ means that $\pi$ is distributed according to  a hyperprior, $\mathsf{S}(\pi|K)$, which is shaped by the  knowledge constraints, $K$ (specified below). Moreover, let $\mathcal{L}(\pi)$ denote the reference measure at the hierarchical level of the probability space. In the discrete case---when $\mathbbm{P}(\mathbb{\Omega}_{X} \times \mathbb{\Omega}_{Y})$ specializes to the probability simplex, $\Delta$---$\mathcal{L}(\pi)$ is instantiated as the Lebesgue measure, $\lambda (\pi )$. As in the conventional base-level OT setting, we assume that $\mathsf{S}(\pi|K)$ is absolutely continuous with respect to $\mathcal{L}(\pi)$, and we overload $\mathsf{S}(\pi|K)$ to denote its Radon-Nikodym derivative with respect to $\mathcal{L}(\pi)$. \\\\ 
Let $\mathbb{M}_{\O{H}}$ be the set of joint hierarchical Bayesian models with support in $\mathbb{\Omega}_{\O{H}}$. The joint hierarchical Bayesian model $\mathsf{M}(x, y, \pi|\O{S}, K) \in \mathbb{M}_{\O{H}}$---our new variational object---reads as follows:
\begin{equation}\label{eq:optimal_hier_model}
\begin{split}
\mathsf{M}(x, y, \pi|\O{S}, K) &= \O{M}(x,y|\pi,\O{S},K)\O{M}(\pi|\O{S}, K) \\
&= \pi(x,y|K)\O{S}(\pi|K) 
\end{split}
\end{equation}
\eqref{eq:optimal_hier_model} is a direct consequence of the conditional independence structure intrinsic to hierarchical modelling  (Figure  \ref{DAG_HFPD_OT}), and the fundamental definitions of $\pi$ and $\O{S}$. 
\\\\ 
\begin{definition}[Expected transport plan] \label{eq:expected_plan}
The random transport plan, $\pi \sim \O{S}(\pi|K)$  (\ref{eq:optimal_hier_model}), has the expected value,
\begin{equation}\label{eq:expected_plan}
\hat{\pi}_{\O{S}} (x,y | K) \equiv\O{E}_{\O{S}}[\pi]
\equiv\int_{\mathbb{P}(\mathbb{\Omega}_{X} \times \mathbb{\Omega}_{Y})} \pi(x,y|K)\O{S}(\pi|K)d\mathcal{L}(\pi).
\end{equation}
\end{definition}
Hence, the marginal model of $(x,y)$---and, therefore, the base-level transport plan induced by $\O{S}$---is $\hat{\pi}_{\O{S}}$ (\ref{eq:expected_plan}), as may be seen by integrating both sides of (\ref{eq:optimal_hier_model}) over $\pi \in \mathbbm{P}(\mathbb{\mathbb{\Omega}}_{X} \times \mathbb{\mathbb{\Omega}}_{Y})$:
\begin{equation}
    \label{eq:expplan2}
  \O{M}(x,y|K) = \hat{\pi}_{\O{S}}(x,y|K). 
\end{equation}
This is a necessary condition for consistent hierarchical Bayesian modelling, and arises because of the deterministic mapping, $(x,y)\rightarrow \pi(x,y)$, imposed by any realization of $\pi \sim \O{S}(\pi|K)$. 

From the foregoing, it is evident that the problem of hierarchical transport  model design is one of optimization of {\em deterministic\/} $\O{S} \in {\mathbb S}(\mathbbm{P}(\mathbb{\mathbb{\Omega}}_{X} \times \mathbb{\mathbb{\Omega}}_{Y}))$, noting that $\O{S}$ appears as a condition in (\ref{eq:optimal_hier_model}).  The challenge in designing the optimal hierarchical model over the set of transport plans in \eqref{eq:optimal_hier_model} is to optimally process the stochastic knowledge constraints imposed by the uncertain environment while being close to an ideal design $\O{M}_{\O{I}}$, which is used by the modeler to encode additional inductive biases and preferences in the HFPD-OT problem. \\ 
\begin{wrapfigure}{r}{0.3\textwidth} 
    \centering
    \includegraphics[width=0.2\textwidth,clip, trim = 0 20 0 0]{ 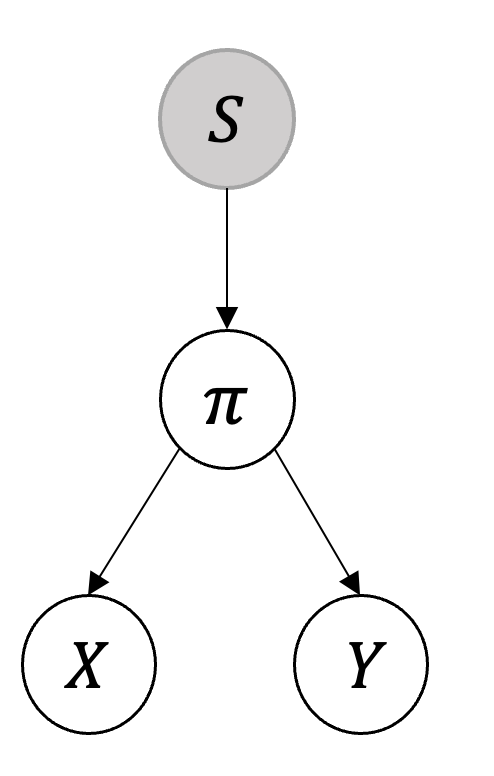}
    \caption{The conditional independence graph associated with HFPD-OT. Shaded nodes are observed. The arrows indicate the causal structure, where an arrow from one variable to a second indicates that the first variable causes the second.}
    \label{DAG_HFPD_OT}
\end{wrapfigure}

The generalized Bayesian inference framework considered here for the purpose of designing the optimal hierarchical model is Fully Probabilistic Design (FPD), introduced in \citep{KARNY2012105} and extended later to the hierarchical setting in \citep{QUINN2016532}.
Generalized Bayesian inference (GBI) is a set of techniques that extend the classical Bayesian inference method by updating the prior belief distribution using a loss function rather than the traditional likelihood function. Under incomplete model specification, the latter may indeed not exist \citep{Bissiri_2016}. However, FPD differs from other GBI techniques in two ways. First, FPD relies on the concept of ideal design in place of a prior, and allow the designer to elicit their personal preferences in the design process through an ideal, and usually unattainable, distribution $\O{M}_{\O{I}}(x,y,\pi | K) \in {\mathbb M}_{\O{H}}^c \equiv\mathbb{M} \smallsetminus \mathbb{M}_{\O{H}}$ (Figure \ref{fig:fig3}). More precisely, we assume that the ideal design factorizes as follows:
\begin{equation}\label{eq:joint_ideal_design}
\mathsf{M}_{\mathsf{I}}(x, y, \pi | K) \equiv\pi_{\mathsf{I}}(x,y | K)\mathsf{S}_{\mathsf{I}}(\pi |K)
\end{equation}
In other words, the joint ideal design, $\mathsf{M}_{\mathsf{I}}(x, y, \pi | K)$, is the base-level ideal design $\pi_{\mathsf{I}}$, modulated by the hierarchical ideal design $\mathsf{S}_{\mathsf{I}}$. Note that $\O{M}_{\O{I}}(x,y,\pi | K)$ is unattainable because $\pi_{\mathsf{I}}$ and $\mathsf{S}_{\mathsf{I}}$ are statistically independent models, and, as such, they may be conflicting in the following sense: \begin{equation}\label{eq:inconsistent_ideals}
\O{E}_{\mathsf{S}_{\O{I}}}(\pi) \neq \pi_{\O{I}}(x,y)
\end{equation}
This is reasonable when we recall that the ideal design is an entirely subjective object used to encode the designer's preferences (and representing their unattainable, zero-loss state of knowledge). By ranking the designer's preferences against this ideal design,  (hierarchical) FPD is consistent with  Savage's framework for Bayesian decision making \citep{LeonardSaravePersoProbabilities}. 
The consistent ranking of knowledge-constrained models \eqref{eq:optimal_hier_model} is via the KLD referenced to $\O{M}_{\O{I}}$. Hence,  the optimal hierarchical design, $\mathsf{M}^{o} (x,y,\pi|K)$, is formulated as follows:

\begin{equation}\label{eq:primal}
    (P): \;\;\;\;\; \mathsf{M}^{o} \in \argmin_{\mathsf{M} \in \mathbb{M}_{\O{H}}}\left\{\mathsf{D}_{\mathsf{KL}}\bigl(\mathsf{M}(x,y,\pi|K)|| \mathsf{M}_{\mathsf{I}}(x,y,\pi)\bigr)\right\},
\end{equation}
subject to:
\[\left\{
\begin{aligned}
\label{eq:primal_constraints}
    \O{E}_{\mathsf{S}}(\mathsf{D}_{\mathsf{KL}}(\mu||\mu_{0})) \leq \eta  \\
 \O{E}_{\mathsf{S}}(\mathsf{D}_{\mathsf{KL}}(\nu||\nu_{0})) \leq \zeta
\end{aligned} \right.\]
We note the following:
\begin{enumerate}
\item Since $\mathsf{D}_{\mathsf{KL}}\bigl(\cdot\;||\; \mathsf{M}_{\mathsf{I}}\bigr)$ is continuous, the space of joint hierarchical Bayesian distributions $\mathbb{M}_{\O{H}}$ is compact in the weak-* topology (see Appendix \ref{appendix}) and the constraint set is nonempty (we can for instance choose $\O{S} \equiv \delta_{\mu_{0} \otimes \nu_{0}}$), then the minimum is attained. 
\item Moreover, the optimal joint hierarchical model $\mathsf{M}^{o}$ is unique up to a set of measure 0. 
\end{enumerate}

The ideal design $\O{M}_{\O{I}}$ enters the KL divergence as the second fixed argument against which all feasible Bayesian hierarchical models are ranked. Importantly, note that the marginals in \eqref{eq:primal} are no longer modeled as deterministic, crisp objects. This assumption is now relaxed, allowing the modeler to express their uncertainty by viewing the marginals as random realizations of some underlying stochastic process. In particular, we describe this uncertainty in the form of moment constraints: the random marginals belong to uncertainty sets in the form of Kullback-Leibler balls, centered on $\mu_{0} \in \mathbbm{P}(\mathbb{\Omega}_{X})$ and $\nu_{0} \in \mathbbm{P}(\mathbb{\Omega}_{Y})$. The new knowledge-constrained set of consistent hierarchical Bayesian models---denoted by $\mathbb{M}_{K} \subseteq \mathbb{M}_{\O{H}}$---is augmented with the following linear moment constraints over the marginals:
\begin{equation} 
\begin{split}
\label{set:hierachical_model}
\mathbb{M}_{K} \equiv\Bigl\{\mathsf{M}(x,y,\pi|K) \;|\; \mathsf{M}(x,y,\pi|K) \in \mathbb{M}_{\O{H}} \;, \; \mu \in \bbmu \;\text{and}\; \nu \in \bbnu \Bigr\}
\end{split}
\end{equation} 
with the sets $\bbmu$ and $\bbnu$ defined as follows (Figure \ref{fig:fig22}): 
\begin{equation}\label{eq:moment_constraints1}
\bbmu \equiv\{\mu \in \mathbb{P}(\mathbb{\Omega}_{X})\; | \;\O{E}_{\mathsf{S}}\left[\mathsf{D}_{\mathsf{KL}}(\mu||\mu_{0})\right] \leq \eta\}
\end{equation}
\begin{equation}\label{eq:moment_constraints2}
\bbnu \equiv\{\nu \in \mathbb{P}(\mathbb{\Omega}_{Y}) \;| \;\O{E}_{\mathsf{S}}\left[\mathsf{D}_{\mathsf{KL}}(\nu||\nu_{0})\right] \leq \zeta \}
\end{equation}
where $\eta \geq 0$ and $\zeta \geq 0$ are prior-elicited KL radii, that express the degree of uncertainty the designer is placing over the marginals. \\\\
As we will see in the sequel, the interaction between the base-level and hierarchical ideals, on one hand, and the knowledge constraints on the other, is what gives rise to the Gibbsian form of the hyperprior in \eqref{eq:hyperprior_init}. \\\\ We  now state the main result of the paper. 

\vspace*{.1cm}

\begin{theorem}\label{main_theorem} 
Let (P) be the HFPD-OT Primal problem, defined in \eqref{eq:primal}. 
\begin{enumerate}
\item (P) is equivalent to the following optimization problem over the set of hierarchical Bayesian models $\mathbb{M}_{\O{H}}$ \eqref{set:hierachical_model}:
\begin{equation}\label{eq:new_primal}
    (P): \;\;\;\;\; \O{M}^{o}(x,y,\pi) \in \argmin_{\O{M} \in \mathbb{M}_{\O{H}}} \left\{\mathsf{D}_{\mathsf{KL}}\bigl(\mathsf{M}(x,y,\pi|K)|| \hat{\pi}_{\tilde{\O{S}}}(x,y)\tilde{\mathsf{S}}(\pi|K)\bigr) \right\},
\end{equation}
subject to
\[\left\{
\begin{aligned}
    \O{E}_{\mathsf{S}}(\mathsf{D}_{\mathsf{KL}}(\mu||\mu_{0})) \leq \eta  \\
 \O{E}_{\mathsf{S}}(\mathsf{D}_{\mathsf{KL}}(\nu||\nu_{0})) \leq \zeta
\end{aligned} \right.\]
where
\begin{equation}\label{eq:new_ideal_design}
\tilde{\mathsf{S}}(\pi|K) \equiv\mathsf{S}_{\mathsf{I}}(\pi)\exp\Bigl(-\mathsf{D}_{\mathsf{KL}}\bigl(\pi(x,y)||\pi_{\mathsf{I}}(x,y)\bigr)\Bigr).
\end{equation}
\item The optimal hyperprior $\mathsf{S}^o (\pi | K)$ reads as follows:
\begin{equation}  \label{eq:hyperprior}
\mathsf{S}^o (\pi | K) \propto
 \exp\left(-\lambda_1^o {\mathsf{D}_{\mathsf{KL}}}(\mu || \mu_0)\right)  \tilde{\mathsf{S}}(\pi|K)
\exp\left(-\lambda_2^o {\mathsf{D}_{\mathsf{KL}}}(\nu || \nu_0)\right), \;\; \mathcal{L}\text{-a.e.}
\end{equation}
\item The Dual program associated with the primal $(P)$ \eqref{eq:new_primal} reads
\begin{equation}\label{eq:dual}
\begin{split}
    (D): \;\;\;\; \sup_{\boldsymbol{\lambda} \succeq 0} \left\{ \log\left(\mathsf{N}(\boldsymbol{\lambda})\right) - \boldsymbol{\lambda}^\intercal \boldsymbol{\theta} \right\},
\end{split}
\end{equation}
where \begin{equation}\label{normalization}
\mathsf{N}(\boldsymbol{\lambda}) \equiv\left(\int_{\mathbbm{P}(\mathbb{\Omega}_{X} \times \mathbb{\Omega}_{Y})} \tilde{\O{S}}(\pi|K) \exp\left(- <\boldsymbol{\lambda}, \mathsf{R}(\pi)>-1\right)d\mathcal{L}(\pi) \right)^{-1},
\end{equation}
and $\boldsymbol{\lambda}$ $\equiv$ $\begin{bmatrix} 
\lambda_{1} \\
 \lambda_{2}
\end{bmatrix}$, $\boldsymbol{\theta} \equiv\begin{bmatrix} \eta \\ \zeta \end{bmatrix}$,   $\mathsf{R}(\pi)$ $\equiv$ $\begin{bmatrix} \label{moment_constraints} \mathsf{D}_{\mathsf{KL}}(\mu||\mu_{0}) \\ \mathsf{D}_{\mathsf{KL}}(\nu||\nu_{0})
\end{bmatrix}$. \\\\
Moreover, strong duality holds, i.e.\ the optimal Kantorovitch potentials, $\boldsymbol{\lambda}^{o}$ in \eqref{eq:hyperprior}, are the solution of the dual problem \eqref{eq:dual},
\begin{equation}
    \label{eq:optKant}
    \boldsymbol{\lambda}^o \equiv\argmax_{\boldsymbol{\lambda} \succeq 0} \left\{ \log\left(\mathsf{N}(\boldsymbol{\lambda})\right) - \boldsymbol{\lambda}^\intercal \boldsymbol{\theta} \right\},
\end{equation}  
and the maximum of the dual problem is attained: $\min_{\O{M}}(P) = \max_{\boldsymbol{\lambda}}(D)$.
\end{enumerate}
\end{theorem}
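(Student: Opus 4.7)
The proof plan splits naturally along the three claims of the theorem, organised around KL decomposition and Lagrangian duality.

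\textbf{Reformulating the primal.} I would begin by applying the hierarchical chain rule for the KL divergence to the factorisation $\mathsf{M}(x,y,\pi) = \pi(x,y)\mathsf{S}(\pi|K)$ tested against $\mathsf{M}_{\mathsf{I}}(x,y,\pi) = \pi_{\mathsf{I}}(x,y)\mathsf{S}_{\mathsf{I}}(\pi)$, yielding
\[
\mathsf{D}_{\mathsf{KL}}\bigl(\mathsf{M} \,\|\, \mathsf{M}_{\mathsf{I}}\bigr) = \O{E}_{\mathsf{S}}\bigl[\mathsf{D}_{\mathsf{KL}}(\pi\|\pi_{\mathsf{I}})\bigr] + \mathsf{D}_{\mathsf{KL}}(\mathsf{S}\|\mathsf{S}_{\mathsf{I}}).
\]
The definition \eqref{eq:new_ideal_design} of $\tilde{\mathsf{S}}$ gives the pointwise identity $\log(\tilde{\mathsf{S}}/\mathsf{S}_{\mathsf{I}}) = -\mathsf{D}_{\mathsf{KL}}(\pi\|\pi_{\mathsf{I}})$, so the base-level expected KL collapses into the hierarchical term, producing $\mathsf{D}_{\mathsf{KL}}(\mathsf{S}\|\tilde{\mathsf{S}})$ up to an additive $\mathsf{S}$-independent normaliser. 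Re-pairing this with a base-level KL against the induced marginal $\hat{\pi}_{\tilde{\mathsf{S}}}$ yields the joint reference $\hat{\pi}_{\tilde{\mathsf{S}}}(x,y)\tilde{\mathsf{S}}(\pi|K)$ and the re-stated primal \eqref{eq:new_primal}.

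\textbf{Gibbs-form hyperprior.} For the moment constraints \eqref{eq:primal_constraints}, I would form the Lagrangian with multipliers $\boldsymbol{\lambda} = (\lambda_{1}, \lambda_{2})^{\intercal} \succeq 0$ and residuals $\mathsf{R}(\pi)$:
\[
\mathsf{L}(\mathsf{S},\boldsymbol{\lambda}) = \int \mathsf{S}(\pi)\Bigl[\log\tfrac{\mathsf{S}(\pi)}{\tilde{\mathsf{S}}(\pi)} + \langle \boldsymbol{\lambda}, \mathsf{R}(\pi)\rangle\Bigr] d\mathcal{L}(\pi) - \boldsymbol{\lambda}^{\intercal}\boldsymbol{\theta}.
\]
For fixed $\boldsymbol{\lambda}$, the inner minimisation is a tilted KL problem to which the Gibbs variational identity applies: the unique (a.e.) minimiser is $\mathsf{S}^{o}(\pi|K) \propto \tilde{\mathsf{S}}(\pi|K)\exp(-\langle\boldsymbol{\lambda},\mathsf{R}(\pi)\rangle)$. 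Substituting the definition of $\tilde{\mathsf{S}}$ and the components of $\mathsf{R}$ recovers \eqref{eq:hyperprior}. I would rigorously justify this step by computing a Gâteaux derivative of $\mathsf{L}$ in $L^{1}(\mathcal{L}(\pi))$ along admissible perturbations with vanishing mean, using the strict convexity of $\mathsf{D}_{\mathsf{KL}}(\cdot\,\|\,\tilde{\mathsf{S}})$ to secure uniqueness.

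\textbf{Dual and strong duality.} Plugging $\mathsf{S}^{o}_{\boldsymbol{\lambda}}$ back into $\mathsf{L}$ collapses the bracketed integrand to $\log\mathsf{N}(\boldsymbol{\lambda})$, with $\mathsf{N}(\boldsymbol{\lambda})$ exactly the normaliser in \eqref{normalization} (the additive constant of $-1$ in the exponent absorbs the Lagrange multiplier enforcing $\int\mathsf{S}^{o}\,d\mathcal{L}(\pi) = 1$). The resulting dual objective is $\log\mathsf{N}(\boldsymbol{\lambda}) - \boldsymbol{\lambda}^{\intercal}\boldsymbol{\theta}$. To upgrade this weak-duality identity and identify $\boldsymbol{\lambda}^{o}$ as the dual maximiser, I would verify (i) strict convexity and weak-$\ast$ lower semi-continuity of $\mathsf{D}_{\mathsf{KL}}(\cdot\,\|\,\tilde{\mathsf{S}})$ on the weak-$\ast$ compact set $\mathbb{M}_{\O{H}}$ (referenced in the setup); (ii) linearity in $\mathsf{S}$ of the moment constraints, since they are expectations; and (iii) a Slater point — any hyperprior concentrated on $\mathbb{\Pi}(\mu_{0},\nu_{0})$, for instance $\mathsf{S} = \delta_{\mu_{0}\otimes\nu_{0}}$, yields zero residuals and therefore strict feasibility whenever $\eta,\zeta > 0$. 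These ingredients license a Sion minimax exchange (equivalently, Rockafellar-Fenchel duality in a Banach setting), giving $\min_{\mathsf{M}}(P) = \max_{\boldsymbol{\lambda}}(D)$.

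\textbf{Main obstacle.} The principal difficulty is the nonparametric, infinite-dimensional setting: I would need to ensure that $\mathsf{N}(\boldsymbol{\lambda})$ is finite on a neighbourhood of $\boldsymbol{\lambda}^{o}$, that the interchange of integration with Gâteaux differentiation is valid, and that the Gibbs minimiser is indeed supported in the constrained class $\mathbb{M}_{K}$ rather than merely in $\mathbb{M}_{\O{H}}$. Establishing weak-$\ast$ continuity (or at least lower semi-continuity) of $\pi \mapsto \mathsf{D}_{\mathsf{KL}}(\mu\|\mu_{0})$ as a functional on $\mathbbm{P}(\mathbb{\Omega}_{X}\times\mathbb{\Omega}_{Y})$ is the delicate ingredient that supports both the compactness argument and the interchange of extrema required for strong duality.
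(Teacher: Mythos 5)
Your route is sound in outline but differs from the paper's in how each part is discharged. For parts (1)--(2) you carry out the KL chain-rule decomposition $\mathsf{D}_{\mathsf{KL}}(\mathsf{M}\,\|\,\mathsf{M}_{\mathsf{I}})=\O{E}_{\mathsf{S}}\bigl[\mathsf{D}_{\mathsf{KL}}(\pi\|\pi_{\mathsf{I}})\bigr]+\mathsf{D}_{\mathsf{KL}}(\mathsf{S}\|\mathsf{S}_{\mathsf{I}})$ and absorb the first term into the reference: this is exactly the ``basic algebraic manipulation'' the paper alludes to but delegates to Theorems 1 and 3 of the cited HFPD reference, and your Lagrangian/Gibbs-variational derivation of \eqref{eq:hyperprior} is a self-contained substitute that matches the reduction of \eqref{eq:primal} to the hyperprior-level problem \eqref{eq:primal_appendix} used in the Appendix. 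For part (3) the paper does not run a Lagrangian--minimax argument: it packages the normalization and the two moment functionals into one linear map, computes the Legendre--Fenchel conjugates explicitly (the conjugate of the KL integral functional being obtained by conjugation under the integral sign, licensed by Rockafellar's decomposability theorem), verifies the Fenchel--Rockafellar qualification via continuity of that conjugate in the uniform norm, and then eliminates the third multiplier $\lambda_{3}$ to reach \eqref{eq:dual}. Your plan---substitute the Gibbs minimizer back into the Lagrangian and justify the exchange of extrema by Slater/Sion---reaches the dual objective more quickly, whereas the paper's conjugate computation additionally delivers the extremality conditions from which $\boldsymbol{\lambda}\succeq 0$ and complementary slackness are read off via the normal cone of the indicator; you would need to supply those separately.

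The one concrete gap is your Slater point. $\mathsf{S}=\delta_{\mu_{0}\otimes\nu_{0}}$ is indeed strictly feasible for the moment constraints when $\eta,\zeta>0$, and the paper uses it only to show the feasible set is nonempty; but in the nonparametric setting it is not absolutely continuous w.r.t.\ $\mathcal{L}$, so $\mathsf{D}_{\mathsf{KL}}(\delta_{\mu_{0}\otimes\nu_{0}}\|\tilde{\mathsf{S}})=+\infty$ and the point lies outside the effective domain of the objective. A constraint qualification supporting the minimax exchange needs a strictly feasible point at which the objective is \emph{finite}---for instance the normalized restriction of $\tilde{\mathsf{S}}$ to $\{\pi:\mathsf{R}(\pi)\preceq(\eta/2,\zeta/2)^{\intercal}\}$, which has finite KL to $\tilde{\mathsf{S}}$ provided that set has positive $\tilde{\mathsf{S}}$-measure. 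With that repair (or by switching to the paper's dual-side qualification, i.e.\ continuity of the conjugate integral functional), your argument goes through.
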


\begin{proof_method}
Results (1) and (2) of the Theorem can be proved using basic algebraic manipulations. However, we opt here for a derivation based on information processing arguments, so as to gain more intuition about the design of the hyperprior in the hierarchical setting. \\\\
Given the factorized joint ideal design in \eqref{eq:joint_ideal_design}, the optimal hyperprior $\mathsf{S}^{o}(\pi|K)$ emerges via two sequential knowledge-processing steps (Figure \ref{fig:fig3}), addressed in the first two of the following items:
\begin{enumerate}
\item Adapting the ideal design and processing the hyperprior without knowledge constraints $K$. The purpose of this first step is to guide the optimization problem $(P)$ in \eqref{eq:primal} from a possibly inconsistent ideal, $\mathsf{M}_{\O{I}}$ \eqref{eq:inconsistent_ideals}, to a new consistent target (step 1 in Figure \ref{fig:fig3}). The adapted hyperprior, $\tilde{\mathsf{S}}$ (\ref{eq:new_ideal_design}), expresses the best compromise between possibly conflicting ideals. It involves the Gibbs-type modulation of the hierarchical ideal design $\mathsf{S}_{\mathsf{I}}$ via a term  that depends on the base-level ideal design $\pi_{\mathsf{I}}$ (Theorem 1 in \citep{QUINN2016532}). The optimal hierarchical model $\tilde{\mathsf{M}} \in \mathbb{M}_{\O{H}}$ is a boundary point in the convex set $\mathbb{M}_{\O{H}}$ and is inferred from \eqref{eq:optimal_hier_model} as follows:
\begin{equation}
\label{eq:Mtilde}
\tilde{\mathsf{M}}(x,y, \pi|K) = \hat{\pi}_{\tilde{\mathsf{S}}}(x,y|K) \tilde{\mathsf{S}}(\pi|K)
\end{equation}
where $\hat{\pi}_{\tilde{\mathsf{S}}}$ is the expected transport plan \textit{w.r.t} $\tilde{\mathsf{S}}$ and follows from \eqref{eq:expected_plan}.
\item Processing the two marginal constraints specified in the knowledge set $K$. This step leads to the new optimization problem stated in \eqref{eq:new_primal}, which results in the optimal hyperprior \eqref{eq:hyperprior} (Theorem 3 in \citep{QUINN2016532}). Each of the marginal constraints induces a MaxEnt Gibbs term that modulates the hyperprior obtained in Step 1.
And the resulting optimal hierarchical model $\mathsf{M}^{o} \in \mathbb{M}_{K} \subseteq \mathbb{M}_{\mathsf{H}}$---which is also a boundary point in the convex set $\mathbb{M}_{K}$---reads as follows:
\begin{equation}
\label{eq:Mo}
\mathsf{M}^{o}(x,y, \pi|K) = \hat{\pi}_{\mathsf{S}^{o}}(x,y|K)\mathsf{S}^{o}(\pi|K) 
\end{equation}
where $\hat{\pi}_{\mathsf{S}^{o}}$ follows similarly from \eqref{eq:expected_plan}.
\item It remains to prove the strong duality result and formally characterize the Kantorovitch potentials in \eqref{eq:dual}. The details of this proof are provided in Appendix~\ref{appendix}. There, we prove strong duality in the infinite dimensional case by relying on the classical Fenchel-Rockafellar duality theorem \citep{pjm_1102992608}, \citep{Villani2008OptimalTO}. More precisely, we demonstrate that the conditions required by the theorem are satisfied in the hierarchical Bayesian setting of HFPD-OT, and we derive the dual problem $(D)$. 
\end{enumerate} 
\end{proof_method}
\begin{figure}[h!]  \centering
\includegraphics[width=.64\linewidth]{ 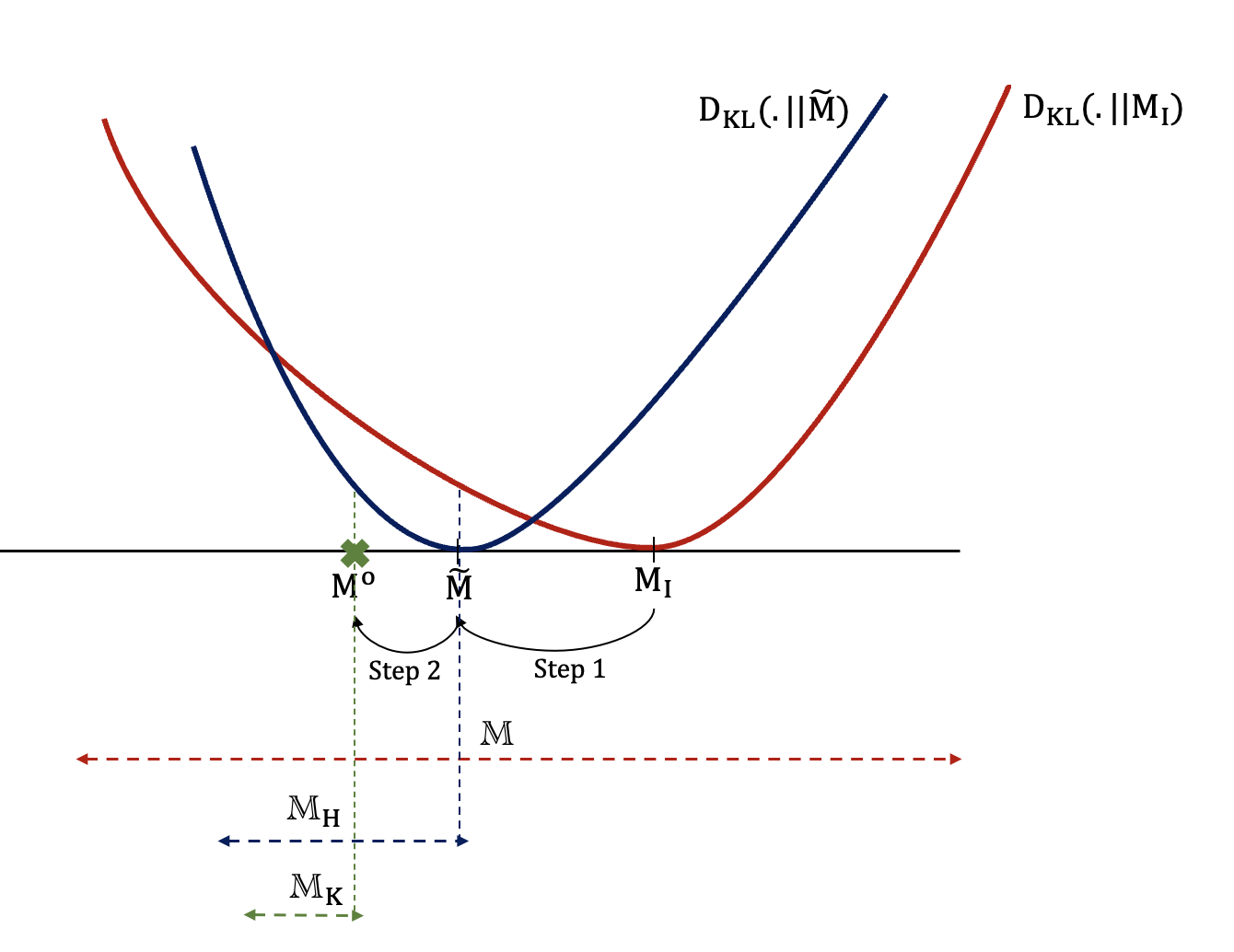}
\caption{ A sequential information-processing view of the optimal hierarchical model, $\mathsf{M}^o \equiv\hat{\pi}_{{\mathsf S}^o} \mathsf{S}^o$, used in the proof method (\ref{eq:Mo}). {\em First}, the inductive biases expressed via the  hierarchical ideal model, $\mathsf{M}_\O{I} \in {\mathbb M}_{\O{H}}^c$ (\ref{eq:joint_ideal_design}), are processed to yield a new optimization problem over a constrained set $\mathbb{M}_{\mathsf{H}}$, whose solution, $\tilde{\mathsf M}$  (\ref{eq:Mtilde}), is on the boundary of  $\mathbb{M}_{\mathsf{H}}$. {\em Second}, the knowledge constraints,  $K$, are processed, further reducing  the feasible set to the subset, $\mathbb{M}_{K}$ (\ref{set:hierachical_model}). The optimal hierarchical model  is $\mathsf{M}^{o}$ (on the boundary of $\mathbb{M}_K$), s.t.\ $\pi \sim \mathsf{S}^o (\pi | K)$ (\ref{eq:hyperprior}).
}
    \label{fig:fig3} 
\end{figure}
By sampling random realizations from our optimal hyperprior, we can design randomized and diverse transport policies in lieu of an immutable and fixed OT plan. This randomization principle is depicted in Figure $\ref{fig:HFPD_OT}$. More precisely, the design of the optimal hyperprior over the space of transport plans is a twofold process:
\begin{enumerate}
\item The knowledge constraints $K$ are processed to yield the optimal hyperprior $\eqref{eq:hyperprior}$. This mainly requires conditioning the Kantorovitch potentials on the uncertainty radii, $(\eta, \zeta)$ (Figure \ref{fig:HFPD_OT_FIRST_STEP}).
\item Once the optimal hyperprior is available, random transport strategies are sampled and used in subsequent transport problems, in lieu of a crisp OT plan. Importantly, having access to a generative model over the space of transport plans provides us with the statistical devices to assess and reason about the intrinsic uncertainty in the transport problem (Figure \ref{fig:HFPD_OT_SEC_STEP}). The expected transport $\hat{\pi}_{\mathsf{S}^{o}}$ plan is obtained from \eqref{eq:expected_plan}.
\end{enumerate}

\begin{remark}\label{uncertainty_potentials}
The Kantorovitch potentials $\lambda_{1} = \lambda_{1}(\eta, \zeta)$ and $\lambda_{2} = \lambda_{2}(\eta, \zeta)$ express the degree of uncertainty in the input data---i.e. the marginals. Depending on their values, they give rise to two interesting extremal modalities, that vary from high uncertainty to perfect characterization of the marginals:
\begin{itemize}
\item If $\eta \rightarrow \infty$ and $\zeta \rightarrow \infty$, it is straightforward from \eqref{eq:dual} that the solution of the dual is achieved when $\boldsymbol{\lambda}^{o}=\boldsymbol{0}$. This is also a direct consequence of complementary slackness. It follows that
\begin{equation}\label{eq:unspecified_marginals}
    \mathsf{S}^{o}(\pi|K)  \xrightarrow[]{\eta \to \infty, \;\zeta \to \infty} \tilde{\mathsf{S}}(\pi|K). 
    \end{equation}
    In other words, when the uncertainty around the marginals is unbounded, the optimal hyperprior is mainly characterized---see (\ref{eq:new_ideal_design})---by  the hierarchical ideal design modulated by  a Gibbsian term that depends on $\pi_{\mathsf{I}}$.
    \item If $\eta \rightarrow 0$ and $\zeta \rightarrow 0$, the uncertainty in the marginals vanishes and \textit{learning}\footnote{In the context of (H)FPD, learning (i.e.\ inductive inference) refers to the optimal processing of  knowledge constraints into the  hyperprior: $K \rightarrow \O{S}^o (\pi | K)$. For more discussion on the role of FPD in furnishing generalized settings of Bayes' rule, see \citep{kracik2005merging}.} is maximal, leading to $\mu \rightarrow \mu_{0}$ and $\nu \rightarrow \nu_{0}$, or equivalently $\pi \rightarrow \tilde{\pi} \in \mathbb{\Pi}(\mu_{0}, \nu_{0})$. It follows from the dual \eqref{eq:dual} that the maximum is attained when $\boldsymbol{\lambda}^{o} \rightarrow \infty$, and we achieve the limit, 
    \begin{equation}\label{eq:degenerate_distribution}
    \mathsf{S}^{o}(\pi|K) \xrightarrow[]{\eta \to 0, \;\zeta \to 0} \tilde{\mathsf{S}}(\pi|K) 
\chi_{\mathbb{\Pi}(\mu_{0}, \nu_{0})}(\pi).
    \end{equation}
    In other words, the hyperprior concentrates on the OT manifold,  $\mathbb{\Pi}(\mu_{0}, \nu_{0})$ (\ref{eq:Kset}). 
    This concentration behaviour is reminiscent of the Laplace-Bernstein-Von Mises convergence theorem \citep{bernstein-von-mises}.

   \end{itemize}
\end{remark}
\begin{remark}\textbf{Conventional Base-level OT}
 Consider further the regime of perfect specification of the marginals, i.e. $\eta \rightarrow 0$, $\zeta \rightarrow 0$ (Remark~\ref{uncertainty_potentials}). The conjugate choice of the ideal hyperprior, $\mathsf{S}_{\O{I}}$, has the following Gibbs form:
    \begin{equation}\label{eq:hierachical_ideal_design}
\mathsf{S}_{\O{I}}(\pi|K) \propto \exp(- \alpha \mathsf{D}_{\mathsf{KL}}(\pi(x, y)||\pi_{\O{I}}(x,y))).
    \end{equation}
Here,  $\alpha > 0$ plays the role of the inverse-temperature. Substituting (\ref{eq:hierachical_ideal_design})  into \eqref{eq:degenerate_distribution}, the optimal hyperprior becomes 
\begin{equation}\label{eq:hyperprior_base_eot}
    \mathsf{S}^{o}(\pi|K) \xrightarrow[]{\eta \to 0, \;\zeta \to 0} \exp(-(\alpha + 1) \O{D}_{\O{KL}}(\pi||\pi_{\O{I}}))
\chi_{\mathbb{\Pi}(\mu_{0}, \nu_{0})}(\pi).
    \end{equation}
When $\pi_{\O{I}}$ is the extended Gibbs kernel (\ref{general_gibbs_kernel})---where we instantiate $\phi$ as the uniform distribution with support in $\mathbb{\Omega}_{X} \times \mathbb{\Omega}_{Y}$---the minimum of $\O{D}_{\O{KL}}(\pi||\pi_{\O{I}})$ in \eqref{eq:hyperprior_base_eot} is exactly achieved at the EOT solution \eqref{eq:OT_solution}:
        \begin{equation}
 \pi^{o}(x,y | K) = \argmin_{\pi \in \mathbb{\Pi}(\mu_{0}, \nu_{0})} \mathsf{D}_{\mathsf{KL}}(\pi(x, y)||\pi_{\O{I}}(x,y)).
    \end{equation}
The latter can be recovered when $\alpha \rightarrow \infty$, for example by simulated annealing \citep{Delahaye2019}:  
\begin{equation}\label{eq:base_level_eot}
\O{S}^{o}(\pi|K) \xrightarrow[]{\eta \to 0, \;\zeta \to 0, \alpha \to \infty} \delta_{\pi^{o}}(\pi).
\end{equation}
\end{remark}


\begin{figure}[h]  \centering
    \begin{subfigure}{0.8\linewidth}
    \centering
        \includegraphics[width=0.6\linewidth]{ 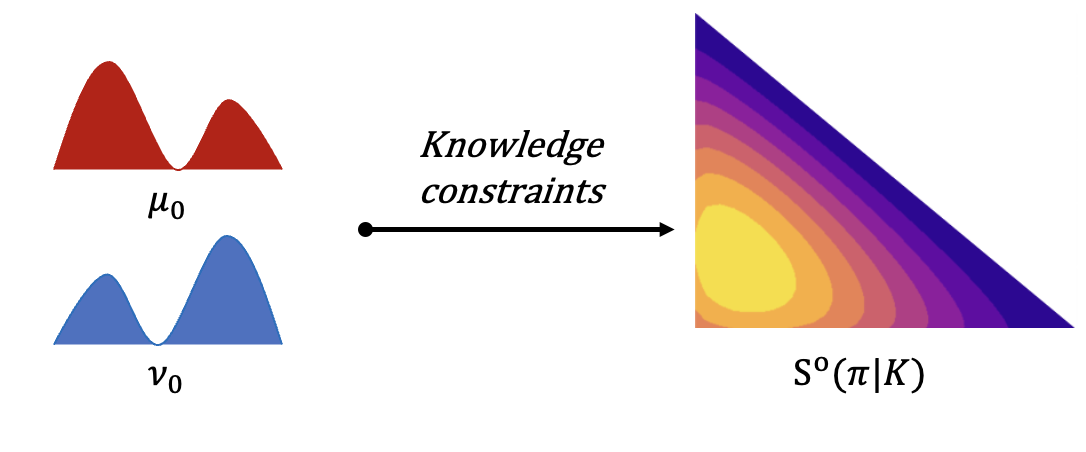}
        \caption{First, the optimal hyperprior, $\mathsf{S}^{o}(\pi|K)$, is computed, by processing the marginal knowledge constraints into the optimal Kantorovitch potentials  (\ref{eq:optKant}).}
        \label{fig:HFPD_OT_FIRST_STEP}
    \end{subfigure}
    \hspace{0.5cm}
   \begin{subfigure}{0.8\linewidth}
   \centering
        \includegraphics[width=0.55\linewidth]{ 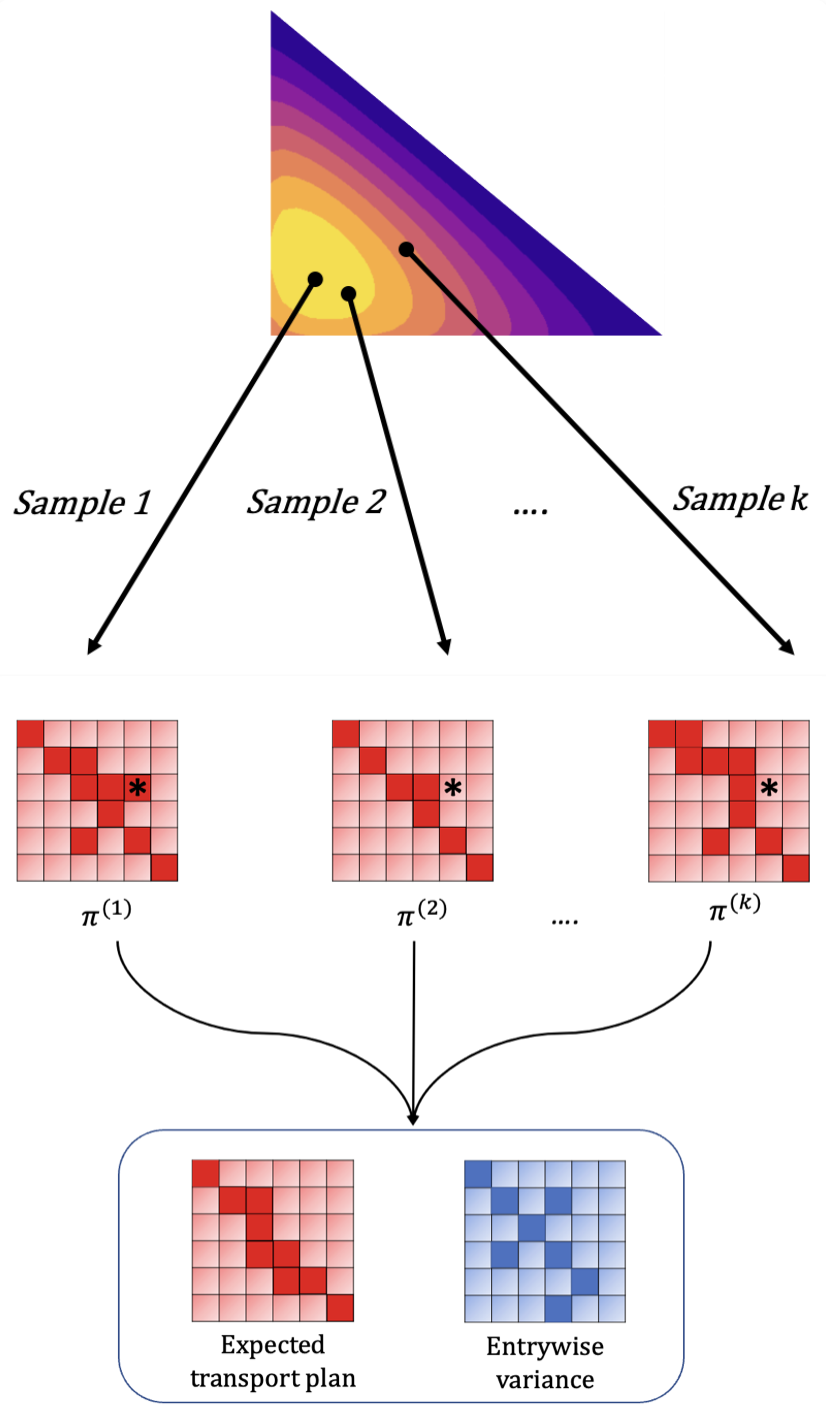}
        \caption{Once elicited, the optimal hyperprior, $\mathsf{S}^{o}(\pi|K)$, can be used to sample random transport plans, $\pi^{(k)}$. These random samples of plans can be used in two important inference steps: \begin{enumerate*}\item the expected transport plan (bottom left), $\hat{\pi}_{\O{S}^o} (x,y | K)$ (\ref{eq:expected_plan}), can be used in downstream transport tasks, in lieu of the conventional base-level OT plan, $\pi^o$ (\ref{eq:general_fpd_ot});  and \item measures of uncertainty (bottom right) in the form of entry-wise (i.e.\ contract) variances, or other summary statistics (including higher-order correlation structure between contracts) can be designed to inform the decision-making process. The asterisk (*) highlights an example of a contract that experiences diversified transport policies, enabled by randomized HFPD-OT. \end{enumerate*}}
        \label{fig:HFPD_OT_SEC_STEP}
    \end{subfigure}

\caption{The two-step principle underlying HFPD-OT. $\mathsf{S}^{o}(\pi|K)$ is a generative model (i.e.\ a distribution) of random transport plans, $\pi$. Realizations, $\pi^{(k)}$, of $\pi$ can be sampled from $\mathsf{S}^{o}(\pi|K)$, and these samples can then be used to estimate an expected transport plan (\ref{eq:expected_plan}) for downstream transport problems, via ergodic averaging. In addition, HFPD-OT enables a principled analysis of the intrinsic uncertainty in the transport problem.}
\label{fig:HFPD_OT}
\end{figure}

\section{The  HFPD-OT hyperprior in the parametric case}\label{parametric_hyperprior}
As already noted, no special assumptions have been made in respect of the hierarchical transport model (\ref{eq:optimal_hier_model}), and so   (\ref{eq:hyperprior}) is the HFPD-OT hyperprior for the nonparametric (transport) process, $\pi \in \mathbbm{P}(\mathbb{\Omega}_{X} \times \mathbb{\Omega}_{Y})$. The finite case---i.e.\ $\#(\mathbb{\Omega}_{X} \times \mathbb{\Omega}_{Y}) < \infty$---induces the parametric setting of HFPD-OT, with $\pi$ defined in the usual way \textit{w.r.t.} the counting measure, and $\O{S}^o (\pi | K)$ defined on a $K$-constrained subset (\ref{eq:primal_constraints})  of the simplex. This allows us to easily visualize key properties of $\O{S}^o (\pi | K )$ in a low dimensional setting, and, importantly, to develop algorithms for computing random draws (Figure \ref{fig:HFPD_OT_SEC_STEP}), $\pi^{(k)} \sim \O{S}^o (\pi | K )$, from the HFPD-OT parametric hyperprior (\ref{eq:hyperprior}),
via approximation of the Kantorovitch potentials (\ref{eq:optKant}).   
 
\subsection{Descriptive analysis of the parametric HFPD-OT hyperprior, $\O{S}^o (\pi | K ) $}\label{descriptive_analysis}
In the finite, parametric case---which we will pursue in the rest of this paper---$x\in\mathbb{\Omega}_{X}\equiv\{x_1, \ldots, x_i, \ldots, x_m\}$ and 
$y\in\mathbb{\Omega}_{Y}\equiv\{y_1, \ldots, y_j, \ldots, y_n\}$, with $2\leq m < \infty$ and $2\leq n < \infty$. 
We refer to $\mathbb{\Omega}_{X}$ and $\mathbb{\Omega}_{Y}$ as the sets of {\em source agents\/} and {\em target agents}, respectively. Then, the base-level distributions are uncertain multinomials, with densities $\mu = \sum_{i=1}^{m}\mu_{i}\delta_{x_{i}}$, $\nu = \sum_{j=1}^{n}\nu_{j}\delta_{y_{j}}$ and $\pi = \sum_{j=1}^{n} \sum_{i=1}^{m}\pi_{i,j}\delta_{x_i,y_j}$. The associated pmfs are structured as vector-matrix objects, and also denoted by the same symbols: $\mu \in \Delta_{m-1}$, $\nu \in \Delta_{n-1}$ and $\pi \in \Delta_{mn-1}$.
 Without loss of generality, we consider the following class of conjugate\footnote{We consider a weak form of conjugacy~\citep{10.4108/icst.valuetools.2011.246122}, where the processing of the ideal design, $\O{S}_{\O{I}} (\pi | K)$,  via hierarchical FPD yields an optimal hyperprior, $\O{S}^o (\pi | K ) $, of the {\em same\/} functional form.} hierarchical ideal designs, parameterized by fixed $ {\boldsymbol \lambda}_{\O{I}} \succeq 0$ (we absorb the parameter conditions---here, ${\boldsymbol \lambda}_{\O{I}}$, $\mu_0$ and $\nu_0$---into the Jeffreys' notation, $K$):
\begin{equation}
\mathsf{S}_{\O{I}}(\pi|K) \propto \prod_{i=1}^{m}\Bigl(\frac{\mu_{i}}{\mu_{0,i}}\Bigr)^{-\lambda_{\O{I}, 1}\mu_{i}}\prod_{j=1}^{n}\Bigl(\frac{\nu_{j}}{\nu_{0,j}}\Bigr)^{-\lambda_{\O{I}, 2}\nu_{j}}
\end{equation}
The base-level ideal design, $\pi_\O{I} (x,y | K)$, has the form of the extended Gibbs kernel \eqref{general_gibbs_kernel}, consistent with the FPD-OT setting. We further specialize $\phi (x,y)$ to the uniform case, $\phi(\cdot) \equiv{\mathcal U}$, yielding the following form of the parametric hyperprior:

\vspace*{.1cm}

\begin{definition}[HFPD-OT hyperprior for the parametric transport plan]\label{hyperprior_definition}
The transport hyperprior \eqref{eq:hyperprior} in the case of a domain, $\mathbb{\Omega}_X \times \mathbb{\Omega}_Y$, of finite cardinality, $m\times n$, is parametric, with parameters $(\lambda_{1}^{o}, \lambda_{2}^{o}, \lambda_{\O{I}, 1}, \lambda_{\O{I}, 2}, \mu_{0}, \nu_{0}, \pi_{\O{I}})$, and support on the probability simplex $\Delta_{m\times n -1}$. It is absolutely continuous \textit{w.r.t.}\  Lebesgue measure, $\lambda$, with density 
\begin{equation}\label{eq:parametric_hyperprior}
\mathsf{S}^{o}(\pi|K) \propto \prod_{i=1}^{m}\prod_{j=1}^{n}\Bigl(\frac{\mu_{i}}{\mu_{0,i}}\Bigr)^{-(\lambda_{\O{I}, 1}+\lambda_{1}^{o})\mu_{i}}\Bigl(\frac{\nu_{j}}{\nu_{0,j}}\Bigr)^{-(\lambda_{\O{I}, 2}+\lambda_{2}^{o})\nu_{j}}\Bigl(\frac{\pi_{i,j}}{ 
\pi_{\O{I}, i,j}}\Bigr)^{-\pi_{i,j}}\;\; \lambda\text{-a.e.},
\end{equation}
with the ideal design having the following Gibbs form: 
\[
\pi_{\O{I}, i,j} \propto \exp\Bigl(-\frac{\mathsf{C}(x_i, y_j)}{\epsilon}\Bigr)
\]
\end{definition}
The number of prior parameters, encoding $K$ in  \eqref{eq:parametric_hyperprior}, is $ (m+1)\times(n+1)$. 
 This endows the HFPD-OT hyperprior design with far more expressivity (i.e.\ degrees-of-freedom (dofs)) than default distributions on the probability simplex. For instance, a Dirichlet distribution of $\pi$ in this finite setting has $m+n+1$ fewer dofs.  
 
 \begin{remark} [Inference with the HFPD-OT hyperprior, $\O{S}^o (\pi | K ) $]
 The {\em normalizing constant\/} of the HFPD-OT hyperprior (\ref{eq:parametric_hyperprior})  is not available in closed form. A full study of its numerical approximation will be the subject of future work. 
\\\\
 The {\em marginal distribution\/} of  $\pi_{1:k, 1:l}\in \Delta_{k\times l-1}$,  being the sub-matrix of $\pi$ associated with the contracts, $\pi_{ij}$, $1\leq i \leq k < m$ and $1\leq j \leq  l < n$, is
\begin{equation}\label{eq:block_marginal}
\O{S}^o(\pi_{1:k, 1:l}|K) = \hspace*{-.8cm}
\int\displaylimits_{(1-w_{kl})\Delta_{mn-kl-1}}
\hspace*{-.9cm}\O{S}^o(\pi|K)
d\pi_{\backslash (1:k, 1:l)},
\end{equation}
where $w_{kl} \equiv\sum_{j=1}^l \sum_{i=1}^k \pi_{ij}$, and $\pi_{\backslash (1:k, 1:l)}$ denotes the complement of $\pi_{1:k, 1:l}$ in $\pi$.
In particular, the marginal distribution of $\pi_{k,l}\in(0,1)$---i.e.\ of the $(k,l)$th random contract, being the normalized mass (probability) transported from the  $k$th source node and the $l$th target node---is
\begin{equation}\label{eq:marginal}
\O{S}^o(\pi_{kl}|K) = \hspace*{-.8cm}
\int\displaylimits_{(1-\pi_{kl})\Delta_{mn-2}}
\hspace*{-.7cm}\O{S}^o(\pi|K)
d\pi_{\backslash (k,l)}.
\end{equation}
\\\\ 
Finally,  the HFPD-optimal {\em full conditional distribution\/} of the $(k,l)$th contract---having fixed all the others at specific probabilities, $\pi_0{_{\backslash (k,l)}}$---is  
\begin{equation}\label{EE_cond}
\mathsf{S}^{o}(\pi_{k,l}|\pi_0{_{\backslash (k,l)}}, K) \propto \mathsf{S}^{o}(\pi_{k,l},\pi_0{_{\backslash (k,l)}} |K) \;\chi_{(0,1-c_{kl})}(\pi_{kl}),
\end{equation}
where 
$c_{kl} \equiv  \hspace*{-.4cm}
\underbrace{\sum_{j=1}^n \sum_{i=1}^m}_{(i,j)\notin \{(k,l), (m,n)\}}\hspace*{-.6cm} \pi_0{_{i,j}}$.
\vspace*{.03cm}
\end{remark}
\subsubsection{Illustration in the $m = n = 2$ case}
To gain further insight into the parametric HFPD-OT hyperprior, $\O{S}^o (\pi | K)$ \eqref{eq:parametric_hyperprior}, we explore its location and shape in the $m = n = 2$ case. Then,  $\mathsf{S}^o(\pi_{11}, \pi_{12}, \pi_{21}|K)$ has support in the three-dimensional simplex, i.e.\ $(\pi_{11}, \pi_{12}, \pi_{21}) \in \mathbbm{P}(\mathbb{\Omega}_{X} \times \mathbb{\Omega}_{Y})\equiv \Delta_3$  
 (Figure \ref{fig:hyperprior_low_dim}). 
  We assume that $\boldsymbol{\lambda}^o\gg \boldsymbol{\lambda}_{\mathsf I}$, which corresponds to the knowledge-dominated regime~\citep{Jeffreys1939-JEFTOP-5} in which the ideal in (\ref{eq:new_ideal_design}) is diffuse in comparison with the $K$-dependent modulating terms in (\ref{eq:hyperprior}).
 In this case, (\ref{eq:parametric_hyperprior}) specializes to: 
\begin{equation}\label{hyperprior}
\begin{split}
\mathsf{S}^{o}(\pi_{11}, \pi_{12}, \pi_{21}|K) \propto  \Bigl(\frac{\pi_{11}+\pi_{12}}{\mu_{0,1}}\Bigr)^{-\lambda_{1}^{o}(\pi_{11}+\pi_{12})} \Bigl(\frac{1-\pi_{11}-\pi_{12}}{1-\mu_{0,1}}\Bigr)^{-\lambda_{1}^{o}(1-\pi_{11}-\pi_{12})}\Bigl(\frac{\pi_{11} + \pi_{21}}{\nu_{0,1}}\Bigr)^{-\lambda_{2}^{o}(\pi_{11} + \pi_{21})}\hspace*{-1.7cm}\times & \\ 
\Bigl(\frac{1-\pi_{11}- \pi_{21}}{1-\nu_{0,1}}\Bigr)^{-\lambda_{2}^{o}(1-\pi_{11} -\pi_{21})}  \Bigl(\frac{\pi_{11}}{\pi_{\O{I}, 11}}\Bigr)^{-\pi_{11}} \Bigl(\frac{\pi_{12}}{\pi_{\O{I}, 12}}\Bigr)^{-\pi_{12}} \Bigl(\frac{\pi_{21}}{\pi_{\O{I}, 21}}\Bigr)^{-\pi_{21}} 
\Bigl(\frac{1-\pi_{11}-\pi_{12}-\pi_{21}}{1-\pi_{\O{I}, 11}-\pi_{\O{I}, 12}-\pi_{\O{I}, 21}}\Bigr)^{-(1-\pi_{11}-\pi_{12}-\pi_{21})} \hspace*{-2.4cm}
\end{split}
\end{equation}
 \begin{wrapfigure}{r}{0.22\textwidth} 
    \centering
    \includegraphics[width=0.18\textwidth]{ 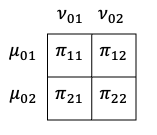}
   \caption{Schematic of an uncertain transport plan in the $\Delta_{3}$ simplex, annotating  the corresponding nominal (i.e.\ prior-specified) row and column marginals. The $(2,2)$ entry (i.e.\ contract) is necessarily  $\pi_{22} = 1 - \pi_{11} - \pi_{12} - \pi_{21}$.}
    \label{fig:hyperprior_low_dim}
\end{wrapfigure}
Its parameters are $\mu_{0,1}\in(0,1)$, \;$\nu_{0,1}\in(0,1)$, \;  
$(\pi_{\O{I},11}, \pi_{\O{I},12}, \pi_{\O{I},21}) \in \Delta_3$ and the Kantorovitch potentials, $\boldsymbol{\lambda}^{o} \succeq \boldsymbol{0}$. 
The purpose of the following simulations is to study the influence of the Kantorovitch potentials, $\boldsymbol{\lambda}^{o}$ (\ref{eq:optKant}), and the nominal marginals, $\mu_{0}$ and $\nu_{0}$, on the location and shape of the hyperprior. 
For ease of visualization (in $\Delta_{2}$), we focus primarily on the bivariate marginal distribution\footnote{All integrals in this section are computed using  Gaussian quadrature integration, yielding results with an average integration error of $\approx 1.46 \times 10^{-8}$.} \eqref{eq:block_marginal}, i.e.\ the hyperprior concentrated on the two contracts forming the first row of the uncertain transport plan (Figure~\ref{fig:hyperprior_low_dim}): 
\begin{equation}\label{marginal_hyperprior}
\mathsf{S}^{o}(\pi_{11},\pi_{12}|K) \propto \int_{0}^{1-\pi_{11}-\pi_{12}}\mathsf{S}^{o}(\pi_{11},\pi_{12},\pi_{21}|K) d\pi_{21}
\end{equation}

\paragraph{Shape parameters:}
The cost matrix \eqref{general_gibbs_kernel} and nominal marginals are respectively set to the following values: 
\[
\O{C} \equiv 
\begin{bmatrix}
0 & 1\\
1 & 0
\end{bmatrix},\;\;\;
(\mu_{0}, \nu_{0}) \equiv \left\{\begin{bmatrix} 
0.3 \\
0.7
\end{bmatrix}, 
\begin{bmatrix} 
0.8 \\
0.2
\end{bmatrix} \right\}.
\]
For now, we fix the smoothness parameter $\epsilon = 1$ and study its influence on the shape of the hyperprior in a separate section. 
We examine the influence of the Kantorovitch potentials, $\boldsymbol{\lambda}^{o}$, on the shape of the hyperprior, by varying their values as follows: $\boldsymbol{\lambda}^{o} \in \{0.05, 10, 100\}^{2}$. 
\\\\
As discussed earlier, these potentials---through their connection to the KLD radii, $(\eta, \zeta)$---quantify the uncertainty in the marginals and induce two asymptotic learning modes. The first is attained when  $\boldsymbol{\lambda}^{o} \rightarrow \boldsymbol{0}$, and coincides with the non-specification of the marginals, and the  absence of effective learning. The second is attained when $\boldsymbol{\lambda}^{o} \rightarrow \infty$, i.e.\ when there is perfect specification of the marginals. The visualizations in Figure~\ref{fig:marginals_lambdas}---which shows the contour plots of the marginal hyperprior, $\mathsf{S}^{o}(\pi_{11},\pi_{12}|K)$ for the chosen values of  $\boldsymbol{\lambda}^{o}$---illustrate this concentration behaviour, as we progress from the first to the second modality. By increasing the potentials, the contours gradually concentrate on a thin statistical manifold, namely $\mathbb{\Pi}(\mu_{0}, \nu_{0})$. In addition to the marginal hyperprior, we show the first row, ($\pi_{11}$, $\pi_{12}$), of the expected transport plan, $\hat{\pi}_{\mathsf{S}}$ \eqref{eq:expected_plan} (red dot). The latter is obtained by averaging samples drawn from the joint hyperprior: $\pi^{(k)} \sim \mathsf{S}^{o}(\pi_{11}, \pi_{12}, \pi_{21}|K)$. The blue dot, on the other hand, corresponds to the first row of the EOT plan, $\pi^{o}(x_i,y_j |K)$ \eqref{eq:OT_solution}, computed for the nominal marginals, $(\mu_0,\nu_0)$, using the Sinkhorn-Knopp algorithm \citep{cuturi2013sinkhorn} (and is, of course, invariant with $\boldsymbol{\lambda}^{o}$). The expected transport plan gradually converges towards the OT plan, as the support of the marginal hyperprior contracts towards $\mathbb{\Pi}(\mu_{0}, \nu_{0})$ when $\boldsymbol{\lambda}^{o} \rightarrow \infty$, which is consistent with the Laplace-Bernstein concentration theorem.
\begin{figure}[htp]
\centering
\includegraphics[width=.3\textwidth]{ 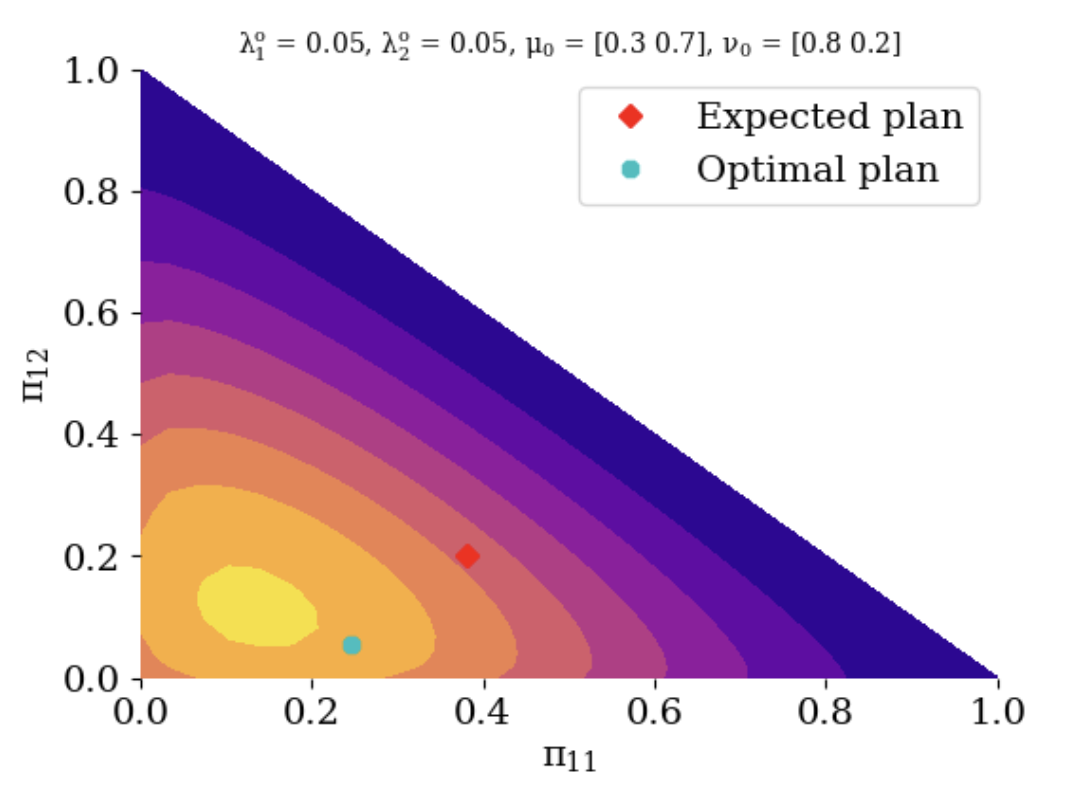}\hfill
\includegraphics[width=.3\textwidth]{ 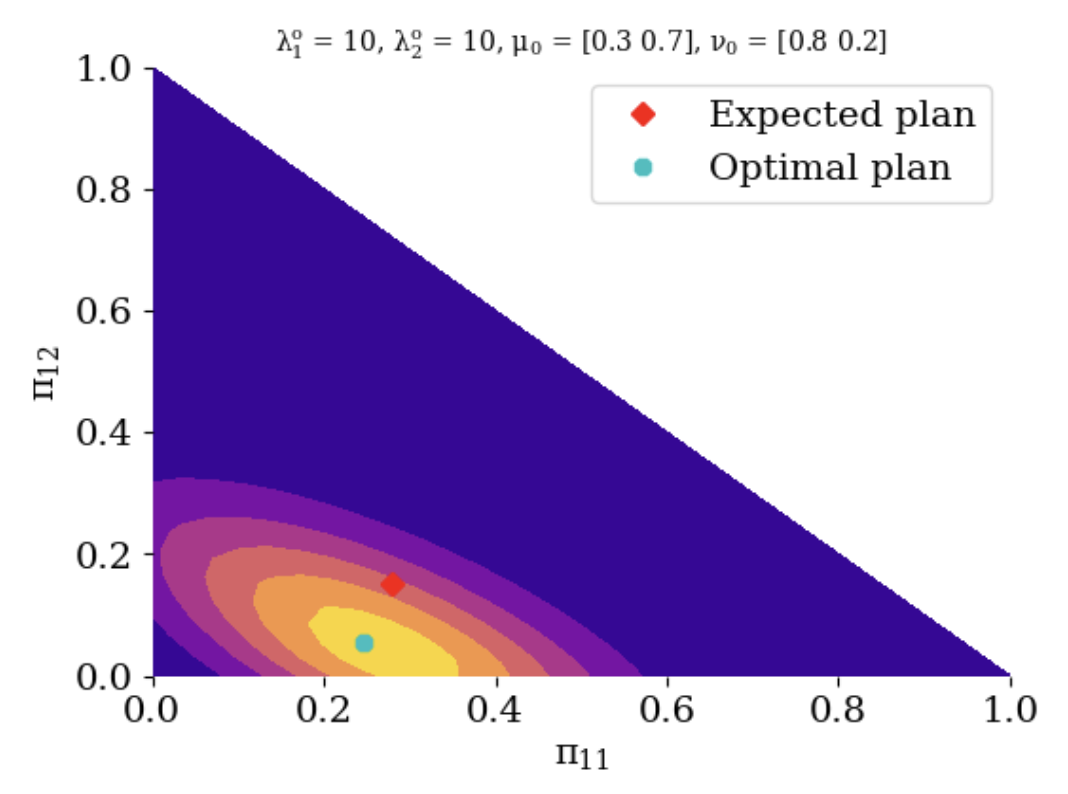}\hfill
\includegraphics[width=.3\textwidth]{ 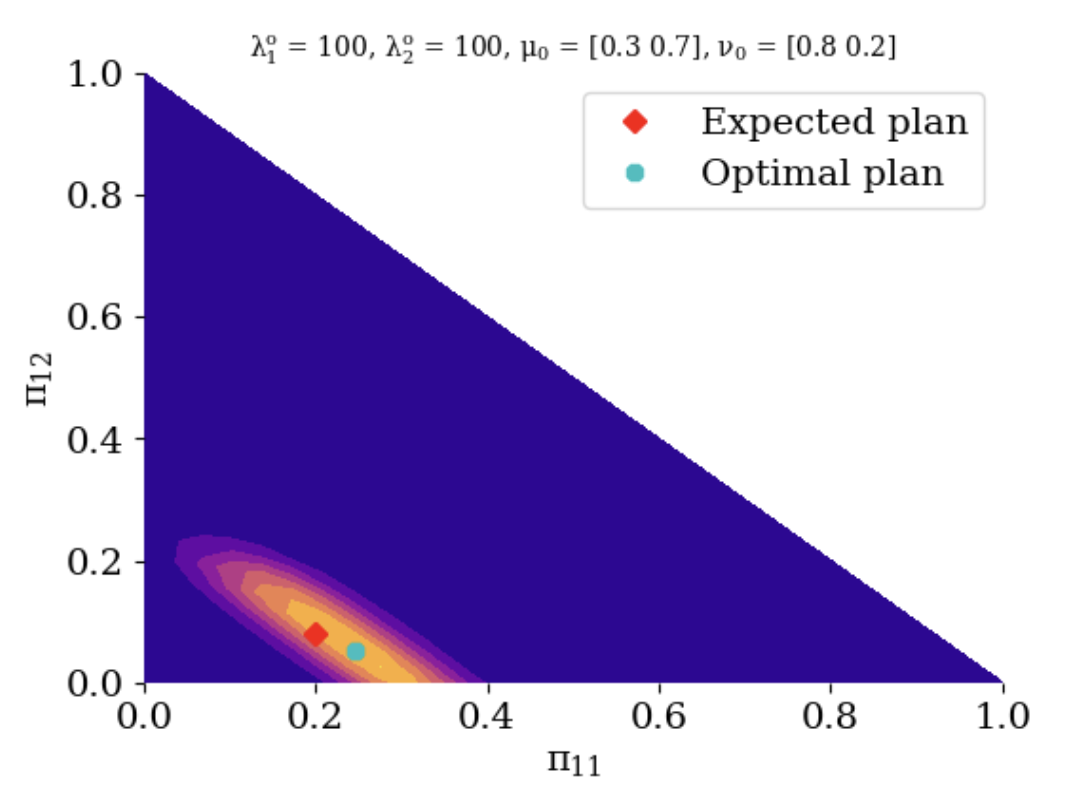}\hfill
\caption{ Contour plots of the bivariate marginal hyperprior, $\mathsf{S}^{o}(\pi_{11},\pi_{12}|K)$, defined over the 2D simplex, $\Delta_2$, for various values of the Kantorovitch potentials, 
$\boldsymbol{\lambda}^{o}$,
and for fixed nominal marginals, $(\mu_{0}, \nu_{0})$, and base-level ideal design. The \textcolor{red}{\bf red} dots correspond to the expected value of the first row, ($\pi_{11}$, $\pi_{12}$),  of the uncertain transport plan. We also show---via the \textcolor{blue}{\bf blue} dots---the first row of  the conventional EOT plan, for $(\mu_{0}, \nu_{0})$. }
\label{fig:marginals_lambdas}
\end{figure}
\paragraph{Location parameters:} The nominal marginals, $(\mu_{0}, \nu_{0})$, play the role of location parameters for the hyperprior. 
To illustrate this, we fix the Kantorovitch potentials and the smoothness parameter, respectively, to default values: $\boldsymbol{\lambda}^{o} = (1, 1)$, $\epsilon = 1$ and vary the nominal marginals as follows: 
\begin{equation}\label{eq:nominal_marginals_config}
(\mu_{0}, \nu_{0}) \in \left\{\begin{bmatrix} 
(0.9, 0.1) \\ (0.1, 0.9)
\end{bmatrix}^{\intercal}, 
\begin{bmatrix} 
(0.5, 0.5) \\ (0.5, 0.5)
\end{bmatrix}^{\intercal}, 
\begin{bmatrix} 
(0.1, 0.9) \\ (0.9, 0.1)
\end{bmatrix}^{\intercal}
 \right\}.
 \end{equation}
\\\\
For each pair of the nominal marginals in \eqref{eq:nominal_marginals_config}, we show in Figure \ref{marginals_nominals} the contour plot of the marginal hyperprior, $\mathsf{S}^{o}(\pi_{11},\pi_{12}|K)$. Moreover, we plot the first row of the expected transport plan, $\hat{\pi}_{\mathsf{S}}$, in red and the EOT plan, $\pi^{o}$ in blue. The location of the mode is clearly influenced by the nominal marginals, $(\mu_{0}, \nu_{0})$, and more precisely, by their symmetry and skewness. The expected plan, $\hat{\pi}_{\mathsf{S}}$ \eqref{eq:expected_plan}, is  attracted by the mode of the marginal hyperprior;  the optimal plan, on the other hand, initially has a low probability under the marginal hyperprior but contracts gradually towards the mode. 

\begin{figure}[htp]
\centering
\includegraphics[width=.3\textwidth]{ 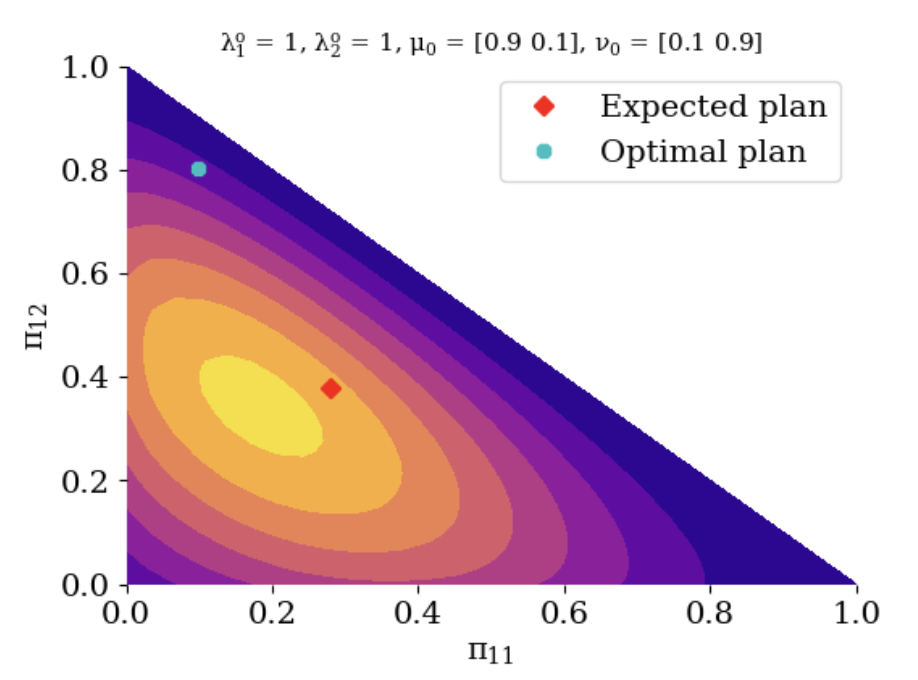}\hfill
\includegraphics[width=.3\textwidth]{ 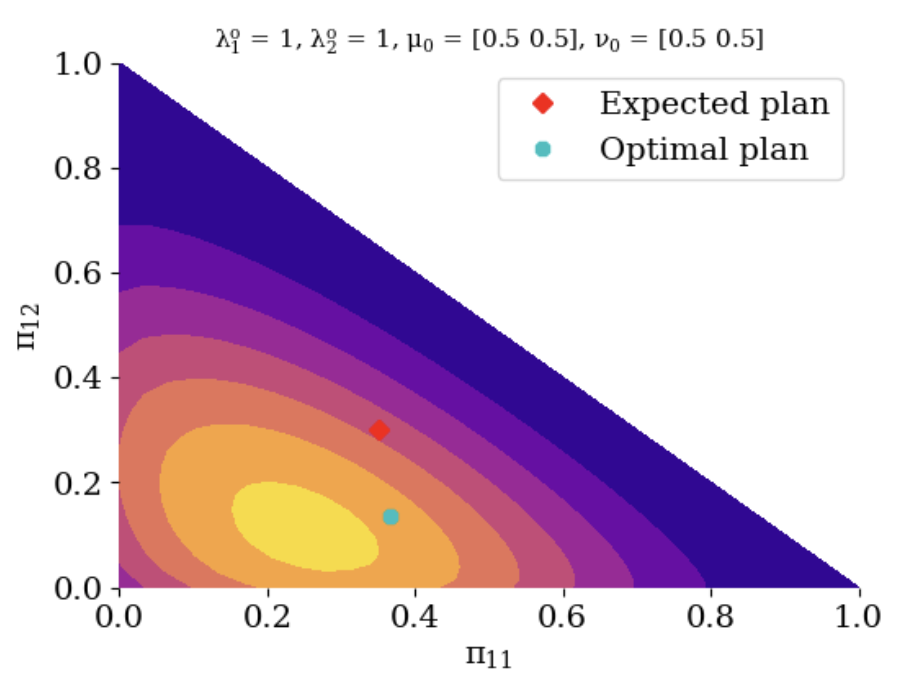}\hfill
\includegraphics[width=.3\textwidth]{ 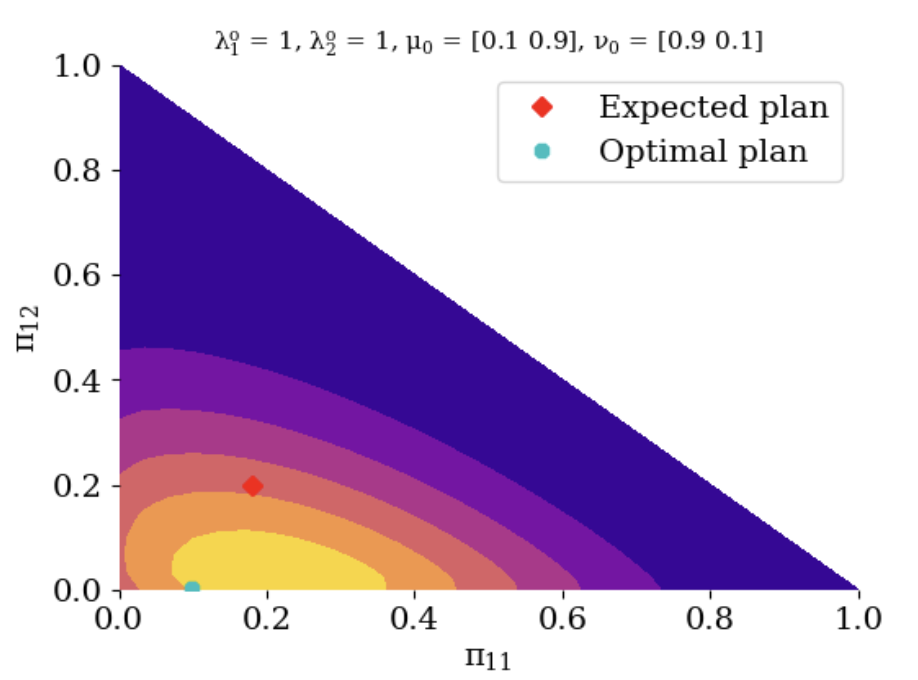}\hfill
\caption{ Marginal hyperprior, $\mathsf{S}^{o}(\pi_{11}, \pi_{12}|K)$, for fixed Kantorovitch potentials, $\boldsymbol{\lambda}^o$, and various values of the nominal marginals, $(\mu_{0}, \nu_{0})$. The \textcolor{red}{\textbf{red}} and \textcolor{blue}{\textbf{blue}} dots  correspond to the first row  of the expected transport plan, and of the EOT plan,  respectively.}
\label{marginals_nominals}
\end{figure}
\paragraph{Influence of the ideal hyperprior \label{par:influence_ideal}:} 
Finally, we explore the influence of the ideal design  \eqref{eq:joint_ideal_design}, and, more precisely, its smoothness parameter, $\epsilon$, which enters at  the base-level of the ideal specification \eqref{general_gibbs_kernel}.  We hold the nominal marginals, $(\mu_0,\nu_0)$, constant, as indicated.  By varying $\epsilon \in \{0.1, 0.5, 10\}$, it is clear from 
Figure \ref{marginals_epsilon} that this parameter  affects the location of the hyperprior,  $\mathsf{S}^{o}(\pi_{11}, \pi_{12}|K)$. 
\begin{figure}[htp]
\centering
\includegraphics[width=.3\textwidth]{ 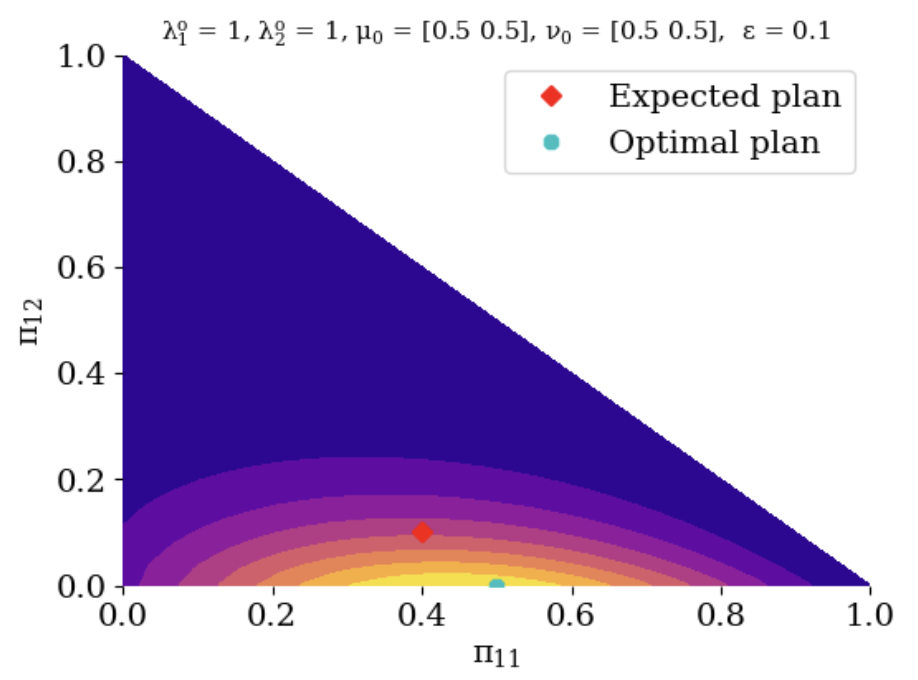}\hfill
\includegraphics[width=.3\textwidth]{ 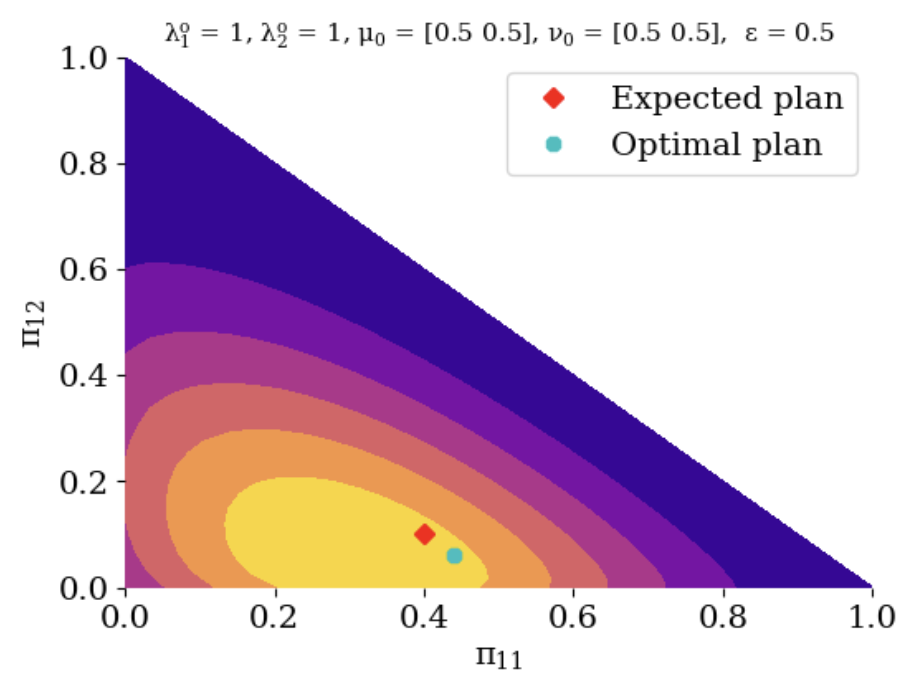}\hfill
\includegraphics[width=.3\textwidth]{ 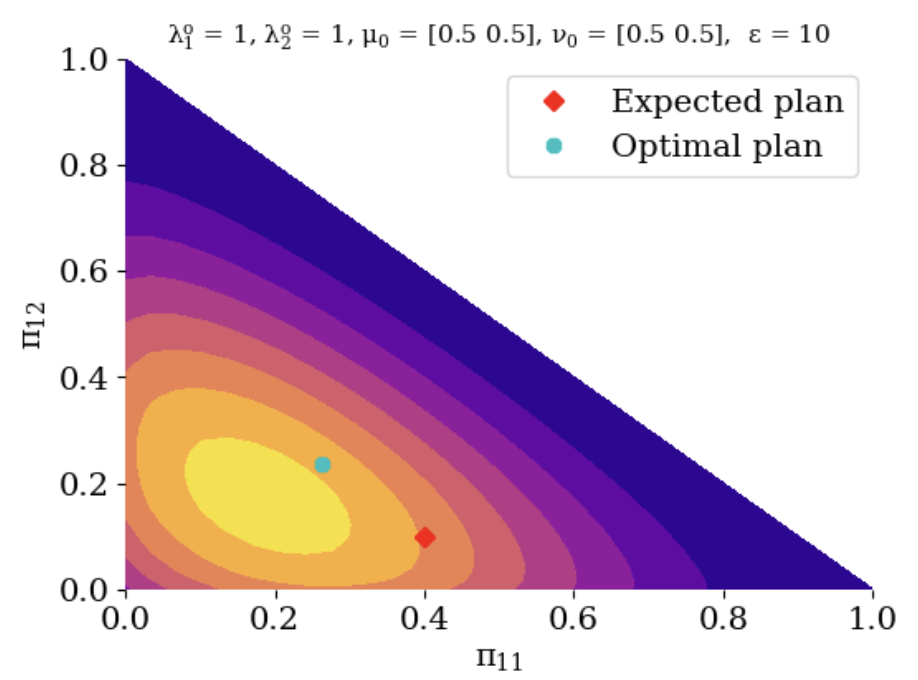}\hfill
\caption{Contour plots of the marginal hyperprior $\mathsf{S}^{o}(\pi_{11}, \pi_{12}|K)$, for various values of the smoothness parameter $\epsilon$. The \textcolor{red}{\textbf{red}} and \textcolor{blue}{\textbf{blue}} dots  correspond to the first row  of the expected transport plan, and of the EOT plan,  respectively.}
\label{marginals_epsilon}
\end{figure}
%

\subsection{Stochastic approximation of the optimal Kantorovitch potentials}\label{Design_Kantorovitch_Potentials}

We now focus on the derivation of the optimal Kantorovitch potentials, $\boldsymbol{\lambda}^{o}$. This requires processing the knowledge constraints, $(\eta, \zeta)$, in the hyperprior, by solving the dual program \eqref{eq:dual}. To this end, we leverage a combination of second-order optimization and MCMC techniques.

Computing $\boldsymbol{\lambda}^{o}$ by means of the dual program in \eqref{eq:dual} is a critical step in the design of the optimal hyperprior, $\mathsf{S}^{o}(\pi|K)$ \eqref{eq:parametric_hyperprior}. However, deriving their exact values in high-dimensional settings is not trivial, as it requires manipulating the intractable normalizing constant \eqref{normalization}. The methodology proposed herein approximates these potentials using a combination of Quasi-Newton \citep{NoceWrig06}, \citep{nesterov} and Hamiltonian Monte Carlo (HMC) \citep{betancourt2018conceptual}, thus circumventing the need to evaluate $\mathsf{N}(\boldsymbol{\lambda})$. In particular, HMC provides a rigorous and efficient framework for sampling in high-dimensional settings: compared to other MCMC techniques, the number of gradient estimations in HMC is less sensitive to the dimension of the problem~\citep{pmlr-v89-mangoubi19a}, making it a convenient choice when generating random transport plans $\pi^{(k)}\sim\O{S}^{o}$. \\\\ 
As proved in \eqref{eq:dual}, the optimal Kantorovitch potentials read as follows:
\begin{equation}\label{eq:optimal_potentials}
\boldsymbol{\lambda}^{o} = \argmin_{\boldsymbol{\lambda} \succeq 0} \left\{ \boldsymbol{\lambda}^{\intercal}\boldsymbol{\theta} - \log\left(\mathsf{N}(\boldsymbol{\lambda})\right) \right\}.
\end{equation}
Let $\varrho(\boldsymbol{\lambda}) \equiv \boldsymbol{\lambda}^{\intercal}\boldsymbol{\theta} - \log\left(\mathsf{N}(\boldsymbol{\lambda})\right) $ denote the optimization objective in \eqref{eq:optimal_potentials}. Its gradient vector can be written conveniently using the following expectation:
\begin{equation}\label{gradient}
\begin{split}
\nabla_{\boldsymbol{\lambda}} \varrho(\boldsymbol{\lambda}) &= \boldsymbol{\theta} - \O{E}_{\mathsf{S}}\bigl[\mathsf{R}(\pi)\bigr]
\end{split}
\end{equation}
We define $s_{t}$ and $n_{t}$, the Kantorovitch potentials and their gradient differentials respectively, as follows: 
\[s_{t} \equiv\boldsymbol{\lambda}_{t+1}-\boldsymbol{\lambda}_{t}\] and \[n_{t} \equiv\nabla\varrho(\boldsymbol{\lambda}_{t+1})-\nabla\varrho(\boldsymbol{\lambda}_{t})\]
where $t>0$ is the iteration in Quasi-Newton. The recursive approximation of the inverse Hessian can be written as follows \citep{NoceWrig06}:
\begin{equation}
    \mathsf{H}_{t+1} = (\boldsymbol{\mathsf{I}}-\varsigma_{t}s_{t}n_{t}^{\intercal})\mathsf{H}_{t}(\boldsymbol{\mathsf{I}}-\varsigma_{t}n_{t}s_{t}^{\intercal}) + \varsigma_{t}s_{t}s_{t}^{\intercal}  \;\;\;, \;\ \varsigma_{t} \equiv\frac{1}{n_{t}^{\intercal}s_{t}}
\end{equation}
where $\boldsymbol{\mathsf{I}}$ denotes the identity matrix. We note that the inverse Hessian $\mathsf{H}_{t}$ depends only on the stochastic gradients $\nabla\varrho(\boldsymbol{\lambda}_{t})$ \eqref{gradient}. Thus, we avoid stability issues when dealing with ill-conditioned stochastic inverse Hessian approximations, as it is the case with high-variance MC samplers. \\\\
Once computed, the gradient and the inverse Hessian are plugged into the usual BFGS iterative updates \citep{NoceWrig06}:
\begin{equation}
\boldsymbol{\lambda}_{t+1} \xleftarrow{} \boldsymbol{\lambda}_{t} - \rho_{t} \mathsf{H}_{t} \nabla_{\boldsymbol{\lambda}} \varrho(\boldsymbol{\lambda}_{t})
\end{equation}
where $\rho_{t} > 0$ is the step size at the $t^{th}$ iteration in the search direction given by: 
\begin{equation} \label{search_dir}
\mathsf{d}(\boldsymbol{\lambda}_{t}) \equiv- \mathsf{H}_{t} \nabla_{\boldsymbol{\lambda}} \varrho(\boldsymbol{\lambda}_{t})
\end{equation}
The step size $\rho_{t}$ should be adapted carefully to ensure convergence to the global minimum $\boldsymbol{\lambda}^{o}$. It is usually computed by solving an auxiliary line search problem, using techniques such as backtrack line search (BTLS) \citep{nesterov}. However, most of line search techniques require the evaluation of the objective $\varrho(\boldsymbol{\lambda})$ at each step. To avoid explicit function evaluations, we propose a simple local approximation that estimates the position of the minimum along the search line \eqref{search_dir}, based solely on two gradient evaluations \citep{SnymanGradientLineSearch}. \\\\
More precisely, the optimal step size that yields sufficient decrease in the search direction \eqref{search_dir} can be found by solving the following problem:
\begin{equation}\label{eq:line_search}
\rho_{t}^{*} = \argmin_{\rho \in [0, 1] }\varrho(\boldsymbol{\lambda}_{t} + \rho \; \mathsf{d}(\boldsymbol{\lambda}_{t}))
\end{equation}
Assuming that $\varrho$ is locally quadratic at $\boldsymbol{\lambda}_{t}$, it follows that solving \eqref{eq:line_search} reduces to finding $\rho$ that satisfies:
\begin{equation}
\varrho(\boldsymbol{\lambda}_{t} + \rho \; \mathsf{d}(\boldsymbol{\lambda}_{t})) = \varrho(\boldsymbol{\lambda}_{t})
\end{equation}
Which yields the following optimal step size:
\begin{equation}\label{eq:optimal_step_size}
\begin{split}
    \rho_{t}^{*} &= \frac{- \mathsf{d}(\boldsymbol{\lambda}_{t})^{\intercal}\nabla \varrho(\boldsymbol{\lambda}_{t})}{{\mathsf{d}(\boldsymbol{\lambda}_{t})}^{\intercal} \nabla^{2} \varrho(\boldsymbol{\lambda}_{t}) \mathsf{d}(\boldsymbol{\lambda}_{t})}
\end{split}
\end{equation}
Finally, by a second-order Taylor expansion at $\boldsymbol{\lambda}_{t}$ and $\boldsymbol{\lambda}_{t} + \mathsf{d}(\boldsymbol{\lambda}_{t})$, the denominator in \eqref{eq:optimal_step_size} can be computed using two gradients estimations, as follows:
\begin{equation}
{\mathsf{d}(\boldsymbol{\lambda}_{t})}^{\intercal}\nabla^{2}\varrho(\boldsymbol{\lambda}_{t})\mathsf{d}(\boldsymbol{\lambda}_{t}) \approx {\mathsf{d}(\boldsymbol{\lambda}_{t})}^{\intercal}\left[\nabla\varrho(\boldsymbol{\lambda}_{t} + \mathsf{d}(\boldsymbol{\lambda}_{t})) -  \nabla\varrho(\boldsymbol{\lambda}_{t})\right] 
\end{equation}

What remains is to compute the gradient terms, which can be estimated using HMC. If $n_{s} > 0$ is the number of independent realizations $\pi^{(i)} \sim \O{S}(\pi|K)$, then the expectation in \eqref{gradient} can be approximated as follows:
\begin{equation}\label{expectation}
\O{E}_{\mathsf{S}}\bigl[\mathsf{R}(\pi)\bigr] \approx \frac{1}{n_{samp}} \sum \limits_{i=1}^{n_{s}} \O{R}(\pi^{(i)})
\end{equation}
At each iteration $t$, the error (stopping criterion) is measured by means of the following Newton's decrement, which corresponds to the inverse Hessian norm of the gradient. This quantity provides a good indication of the proximity to the optimal Kantorovitch potentials:
\begin{equation}
\mathsf{err} \equiv\nabla_{\boldsymbol{\lambda}}\varrho(\boldsymbol{\lambda}_{t})^{\intercal}\nabla_{\boldsymbol{\lambda}}^{2}\varrho(\boldsymbol{\lambda}_{t})^{-1}\nabla_{\boldsymbol{\lambda}}\varrho(\boldsymbol{\lambda}_{t})
\end{equation}
The optimal potentials $\boldsymbol{\lambda}^{o}$ are then plugged into \eqref{eq:hyperprior} and the optimal hyperprior can be used to generate random transport plans, by means of another HMC sampler.

\RestyleAlgo{ruled}
\SetAlgoVlined
\LinesNumbered
\SetKwInput{KwInput}{Input}
\begin{algorithm}[H]
\caption{Approximation of the Kantorovitch potentials} \label{algo:compute_kanto_algo}
  \KwInput{nominal marginals $(\mu_{0}, \nu_{0})$, KLD radii $(\eta, \zeta)$, target precision $\tau$, base-level ideal design $\pi_{\mathsf{I}}$, hierarchical ideal design $\mathsf{S}_{\mathsf{I}}$, number of samples $n_\O{samp}$}
  \KwResult{$\boldsymbol{\lambda}^{o}$}
  Initialization: $t=1$, $\boldsymbol{\lambda}_{t} \succeq 0$, $\rho_{t}=1$, $\mathsf{H}_{t}=\boldsymbol{\mathsf{I}}$, $\mathsf{err}=\infty$ \;
  \While{$\tau < \mathsf{err}$}{
    \text{Sample} $\{\pi^{(l)}_{t}\}_{l=1}^{n_\O{samp}} \sim \mathsf{S}(\pi|K_{-\boldsymbol{\lambda}}, \boldsymbol{\lambda}_{t})$ \algorithmiccomment{\textit{HMC sampler. $K_{-\boldsymbol{\lambda}}$ denotes all parameters in the knowledge set $K$, except $\boldsymbol{\lambda}$}} \;
    \text{Estimate} $\O{E}_{\mathsf{S}(\pi|K_{-\boldsymbol{\lambda}}, \boldsymbol{\lambda}_{t})}\left[\mathsf{R}(\pi)\right]$ \;
    \text{Estimate} $\nabla_{\boldsymbol{\lambda}}\varrho(\boldsymbol{\lambda}_{t})$ \;
    \text{Compute} $\check{\boldsymbol{\lambda}}_{t+1} \leftarrow \check{\boldsymbol{\lambda}}_{t} - \mathsf{H}_{t} \nabla \varrho(\boldsymbol{\lambda}_{t})$ \;
    \text{Sample} $\{\check{\pi}^{(l)}_{t+1}\}_{l=1}^{n_\O{samp}} \sim \mathsf{S}(\pi|K_{-\boldsymbol{\lambda}}, \check{\boldsymbol{\lambda}}_{t+1})$ \;
    \text{Estimate} $\nabla_{\boldsymbol{\lambda}} \varrho(\check{\boldsymbol{\lambda}}_{t+1})$ \;
    \text{Compute} $\rho^{*}_{t}$ \;
    \text{Update}  $\boldsymbol{\lambda}_{t+1} \xleftarrow{} \boldsymbol{\lambda}_{t} - \rho^{*}_{t} \mathsf{H}_{t} \nabla \varrho(\boldsymbol{\lambda}_{t})$ \;
    \text{Compute}  $s_{t}$, $n_{t}$ and $\varsigma_{t}$ \;
    \text{Update} $ \mathsf{H}_{t+1} \leftarrow (\boldsymbol{\mathsf{I}}-\varsigma_{t}s_{t}n_{t}^{\intercal})\mathsf{H}_{t}(\boldsymbol{\mathsf{I}}-\varsigma_{t}n_{t}s_{t}^{\intercal}) + \varsigma_{t}s_{t}s_{t}^{\intercal}$ \;
    \text{Update} $\mathsf{err}$ \;
    \text{Update} $t \leftarrow t+1$
  }
 \Return $\boldsymbol{\lambda}_{t+1}$
\end{algorithm}
\begin{remark}\label{rmk:computational_complexity}\textbf{Computational complexity.}
In  Algorithm~\ref{algo:compute_kanto_algo}, we replace each approximation of the normalising constant, $\mathsf{N}(\boldsymbol{\lambda})$ \eqref{normalization}, with two gradient approximations. Therefore, the overall computational complexity is mainly driven by the sampling operations in line $3$ and $7$ of the Algorithm, whose complexity is, in turn, contingent upon the number of gradient evaluations used in the leapfrog integrator of the HMC sampler \citep{betancourt2018conceptual}. Under certain regularity conditions, this number is of order $\mathcal{O}(\sqrt{mn})$ \citep{pmlr-v89-mangoubi19a}. Though these regularity conditions are not fully satisfied here (see Remark \ref{rmk:hmc_mixing_time}), this provides us with a good lower bound on the computational complexity. Using a mean-field variational Bayes method at each iteration of the Quasi-Newton method---which assumes that all the parameters (i.e. contracts), $\pi_{i,j}$, are independent---would result in a linear time complexity in the number of parameters, that is $\mathcal{O}(mn)$.
\end{remark}
\begin{remark}\label{rmk:hmc_mixing_time} \textbf{On HMC mixing properties.}
It is worth noting that the main convergence results of HMC, when sampling from a log-concave function, $\O{e}^{-f}$, require strongly convex and Lipschitz smooth (i.e. Lipschitz $\nabla f$) potential functions, $f$ \citep{chen2019optimalconvergenceratehamiltonian}. However, the KLD is not Lipschitz smooth and the theoretical convergence results are not guaranteed in our setting. This results in a longer integration time and biased estimators, especially when $(\eta, \zeta) \rightarrow (0, 0)$. For the time being, we will use HMC while carefully tuning its main parameters (integrator step size, adaptation step, {\em etc}.), and will explore specialized samplers in a separate work.
\end{remark}
\section{HFPD-OT for Algorithmic fairness in market matching}\label{simulations}

The goal of algorithmic fairness is to detect and mitigate algorithmic biases induced by automated decision-making systems \citep{Barocas2018FairnessAM}. This is a compelling setting for HFPD-OT, since we can benefit from randomized transport plans to elicit fair transport strategies in the presence of uncertainty. Note that OT {\em for\/} fairness has already been proposed in other works (see \citep{delbarrio2018obtaining} and references therein), with the focus being on notions of data repair and learning fair models. In contrast, we are concerned, here,   with {\em fair OT}, whose purpose is to design transport plans that are fair {\em per se}. The literature on fair OT is  sparse: in \citep{huguesjasonchen2021}, the authors address the fair OT problem by proposing a dynamic and distributed fair OT algorithm. In this manuscript, we propose a different approach that leverages randomized policies, which are induced naturally by the HFPD-OT setting.

To appreciate the implications for fair OT of the randomization and diversity allowed by HFPD-OT, we study the problem of fair market matching \citep{galichon2021unreasonableeffectivenessoptimaltransport}, \citep{echenique2024stablematchingtransportation}, and more precisely the question of worker-job matching, in which the nominal marginals, $\mu_0$ and $\nu_0$, are estimates of the distributions of workers and jobs, respectively. An agent $x_{i} \in \mathbb{\Omega}_{X}$ represents a category of workers or skills, while an agent $y_{j} \in \mathbb{\Omega}_{Y}$ is a job opportunity or a company. In particular, we study  \textit{vertically}-differentiated agents: workers in one category may exhibit skills not available in other categories. Similarly, some companies may differ in their size or may have unique production technologies \footnote{This is in contrast to  \textit{horizontally} differentiated agents, where some hierarchy may exist between agents.}. A contract $\pi_{i,j}$ seeks to match some of  the workers in category $x_{i}$ with some of the job opportunities offered by $y_{j}$. 

Our purpose is to study the following question: 
    \textit{Can randomized transport plans elicit long-term fair matching strategies in a worker-job matching problem, for agents as well as for individual contracts?}
Our notion of fairness is asymptotic, in the sense that fairness is achieved in the long-run. This is in contrast to the static (i.e.\ invariant) designs of  classical OT,  which may, indeed, satisfy a standard fairness metric based on the ensemble of contracts on $\mathbb{\Omega}_{X} \times \mathbb{\Omega}_{Y}$, but, unfortunately, harms the same individual agents or contracts, either because of:
\begin{itemize}
\item[(i)] misspecification of the marginals for some of the agents, $\mathbb{\Omega}_{X}$ or $\mathbb{\Omega}_{Y}$; and/or 
\item[(ii)] an invariant and uneven distribution of mass across the contracts, $\pi_{i,j}^o$. 
\end{itemize}
Before addressing the problem of fair labour market matching (Section~\ref{sec:marketmatching}), we review the fairness-related concept of diversity.  

\subsection{Simulation study}
\label{sec:sim}
We consider the following setting: 
\begin{itemize}\label{experimental_setting}
\item $m \equiv n \equiv d \equiv 20$
\item $\epsilon \equiv 10^{-3}$
\item $\mathsf{C}_{i, j} \equiv \|x_{i}-y_{j}\|_2^{2}, \;\; (i , j) \in [\![m]\!] \times [\![n]\!] \ $
\item $\eta \equiv 2$,\; $\zeta\equiv 2$
\item $\lambda_{\O{I}, 1} \equiv 0.5$,\; $\lambda_{\O{I}, 2} \equiv 0.5$
We simulate the nominal worker and job distributions as  $\mu_{0} \sim \mathcal{tN}(2, 5)$ and $\nu_{0} \sim \mathcal{tN}(6, 3)$, respectively, where $\mathcal{tN}(a, b)$ denotes the truncated Gaussian distribution with positive support,  mean $a$ and variance $b$.
\item To sample from the hyperprior, $\O{S}^o( \pi | K)$ (\ref{eq:parametric_hyperprior}),  we leverage the Hamiltonian Monte Carlo (HMC) module available in TensorFlow Probability (version 0.24.0)\footnote{https://www.tensorflow.org/probability}, with the following configuration: 
\begin{itemize}
    \item Number of burn-in steps: $8000$
    \item Number of adaptation steps: $0.8 \times \text{number of burn-in steps}$
    \item Target acceptance probability (fixed):  $0.6$
    \item The length traveled by the leapfrog integrator is adjusted using a No U-Turn Sampler (NUTS) \citep{hoffman2011nouturnsampleradaptivelysetting}.
    \item The step size is optimized using a dual averaging policy \citep{hoffman2011nouturnsampleradaptivelysetting}.
    \item The sampler is compiled using XLA (Accelerated Linear Algebra).
    \item The optimal Kantorovitch potentials \eqref{eq:optimal_potentials} are computed using Algorithm~\eqref{algo:compute_kanto_algo}.
\end{itemize} 
\item The base-level EOT model (\ref{eq:general_fpd_ot}) is computed using the POT library \citep{10.5555/3546258.3546336}.
\end{itemize}

\subsection{Quantifying diversity in HFPD-OT}
Our definition of long-term fairness---to follow---relies on the notion of diversity, which we quantify using the following \textbf{diversity index}:

\vspace*{.2cm}

\begin{definition}[Diversity index]
\label{def:div}
    Let  $m \times n$ be the dimension of the parametric random transport plan, $\pi\sim \O{S}^{o}(\pi|K)$ (\ref{eq:parametric_hyperprior}).  The 1-diversity index (or perplexity score \citep{Jelinek1977PerplexityaMO})  associated with $\O{S}^{o}$ is:
    \begin{equation}\label{eq:diversity_index}
\O{D}(\O{S}^{o}(\pi|K)) \equiv\O{E}_{\O{S}^{o}}\left[\exp\bigl(\O{H}(\pi)\bigr)\right],
    \end{equation}
where $\O{H}(\pi)$ denotes the entropy of $\pi$:
\begin{equation}
\O{H}(\pi) \equiv-\sum_{i=1}^{m} \sum_{j=1}^{n} \pi_{i,j}\log(\pi_{i,j})
\end{equation}
\end{definition}
    
\begin{figure}[htp]
\begin{subfigure}{1.0\linewidth}
\centering
\includegraphics[width=.5\textwidth]{ 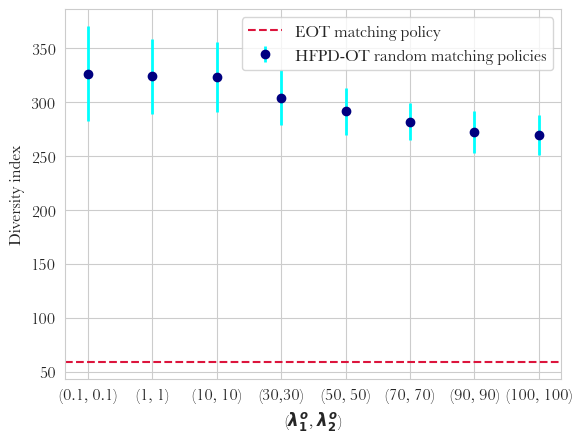}\hspace{0.7cm}
\caption{Diversity index, $\O{D}(\cdot)$, computed for different values of the Kantorovitch potentials. The average diversity index attained by HFPD-OT (red dots) remains greater than that of the EOT policy (red line), even when the latter is computed using a relatively high smoothing parameter, $\epsilon = 0.1$.}
\label{fig:fig781}
\end{subfigure}
\begin{subfigure}{1.0\linewidth}
\centering
\includegraphics[width=.5\textwidth]{ 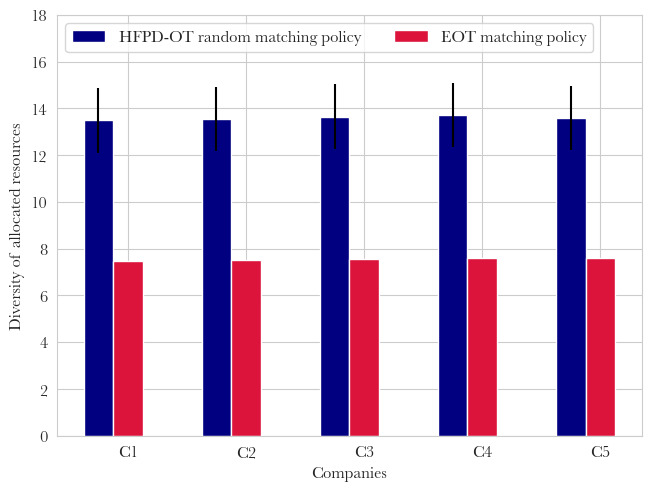}\hfill
\caption{Fairness for agents illustrated by computing the mean diversity index of the conditional transport plan $\pi(.|Y=y_{0})$ for five different companies $(C_{1}, \dots, C_{5})$. The Kantorovitch potentials and the smoothing parameter are respectively fixed to: $(\lambda_{1}^{o}, \lambda_{2}^{o})= (10, 10)$, and $\epsilon = 0.1$. Here again, a high diversity index means that each company is matched to a far more diverse set of skills and workers than it would be possible with highly-smoothed EOT policies.}
\label{fig:fig782}
\end{subfigure}
\caption{Comparative study of the diversity, $\O{D}(\cdot)$, induced by HFPD-OT random matching policies and fixed EOT matching policies. Error bars correspond to the $95\%$ confidence interval over 100 Monte Carlo experiments in HFPD-OT.}
\end{figure}
In Figure \ref{fig:fig781}, we compute and graph $\O{D}(\cdot)$ for different values of the Kantorovitch potentials (\ref{eq:optimal_potentials}), and we compare the diversity of random HFPD-OT matching polices to that of the base-level EOT policy (\ref{eq:general_fpd_ot}). While increasing the Kantorovitch potentials decreases the diversity, it remains substantially higher than that of the EOT policy, even when the smoothness parameter is fixed at a relatively high value: $\epsilon = 0.1$. In practical terms, a higher $\O{D}(\cdot)$ ensures that a more diverse set of skills is allocated to each company, in expectation. Similarly, workers are expected to have access to a more diverse set of job opportunities.  We use this insight in the sequel, to formalize the meanings of diversity and fairness both for agents (Definition~\ref{def:fairness_def1}) and contracts (Definition~\ref{def:fairness_def2}).
\begin{remark}
One might argue that the smoothness parameter, $\epsilon >0$,  in base-level EOT (\ref{eq:general_fpd_ot}) can be used to induce some level of diversity for fair OT (i.e.\ objective (ii) in Section~\ref{simulations}). However, it does not address objective (i). Note that the randomness in HFPD-OT is {\em informed}, since it emerges from modelling the uncertainty in the marginals, whereas the smoothness in EOT is mainly a computational convenience that is not informed by a mathematical model of uncertainty. 
\end{remark}
\subsection{Long-term fairness for agents through randomization}
We first discuss the notion of fairness for agents (groups of workers and companies, in our application) enabled by a randomized transport strategy and propose the following definition. 

\vspace*{.2cm}

\begin{definition}\textbf{Fairness for agents via diversified transport plans}\label{def:fairness_def1}
    
    A transport policy fulfills fairness for agents if: 
\begin{enumerate}
\item  It acknowledges that marginals (the supply and demand) may not have been fairly measured and incorporates this knowledge in the design of the transport policy.
\item It allows asymptomatically for a diversified allocation of resources. This diversification should be proportional to the uncertainty in the marginals.
\end{enumerate}
\end{definition}


Underestimating the supply of a category of workers can produce a matching policy in which all workers in that category are unfairly assigned to closer companies (in the sense of the cost $\O{C}$). Accounting for uncertainty in the supply, however, would allow,  in expectation, for a more diverse mix of skills to be transferred to companies. To illustrate this point further, we analyze the diversity of workers 
matched to companies and compute the mean diversity index of the {\em random\/} conditional transport policy $\pi(.|Y=y_{0})$ associated with each company, $y_{0} \in \mathbb{\Omega}_{Y}$. Figure \ref{fig:fig782} shows that the diversity of skills allowed by HFPD-OT remains consistently higher than that of the base-level EOT, thus allowing each company $y_{0}$ to benefit from a more diverse set of skills.

\subsection{Long-term fairness for contracts through randomization}
\label{sec:marketmatching}
In our worker-company matching problem, as in many other transport problems, contracts correspond to a physical infrastructure, deployed to match resources to demand (agencies, recruitment processes, crowd-sourcing labour market platforms, {\em etc}.). By design, the OT model yields a sparse transport strategy where the transport burden is supported by a small number of contracts, and though the base-level EOT may allow for smoother, i.e. more diverse transport strategies,  this diversity does not emerge from a proper mathematical modelling of uncertainty (Remark \ref{remark:randomness}). In contrast, randomized HFPD-OT strategies allow the activation of a more diverse set of contracts, yielding a fairer utilization of the transport infrastructure. In this regard, HFPD-OT is closely related to maximum diversity problems \citep{MARTI2022795}.  \\\\
To formalize the previous point, we start by introducing the notion of {\em eligible} contracts:
\begin{equation}\label{eq:eligible_contracts}
\mathbb{\Pi}_{\O{E}}(\eta, \zeta, \upsilon) \equiv \bigl\{\pi_{i,j}, \; (i , j) \in [\![m]\!] \times [\![n]\!] \; | \; \pi \in \O{supp}(\O{S}^{o}) \; \text{and} \; \O{E}_{\O{S}^{o}}\left[\mathbb{1}(\pi_{i,j} \geq \upsilon)\right] > 0\bigr\}.
\end{equation}
Here, $\upsilon > 0$ is an activation threshold, imposed by design constraints (technical specifications, design requirements, {\em etc}.). Eligible contracts are those with a positive probability of being active under the hyperprior, $\O{S}^{o}(\pi|K)$.  The set $\mathbb{\Pi}_{\O{E}}$ is better understood through its asymptotic behaviour:
\begin{itemize}
\item In the absence of any constraint on the marginals, $\mathbb{\Pi}_{\O{E}}$ is fully determined by the base-level and hierarchical ideal designs (\ref{eq:unspecified_marginals}), and:
\begin{equation} \mathbb{\Pi}_{\O{E}}(\eta, \zeta, \upsilon) \xrightarrow{\eta \rightarrow \infty, \zeta \rightarrow \infty} \bigl\{\pi_{i,j}, \; (i,j) \in [\![m]\!]\times [\![n]\!] \; | \;  \pi \in \O{supp}(\tilde{\O{S}})  \; \text{and} \; \O{E}_{\tilde{\O{S}}}\left[\mathbb{1}(\pi_{i,j} \geq \upsilon)\right] > 0\bigr\}.
\nonumber
\end{equation}
 In particular, if the base-level and hierarchical ideal designs are chosen to be uninformative, it follows that
 \begin{equation} \mathbb{\Pi}_{\O{E}}(\eta, \zeta, \upsilon) \xrightarrow{\eta \rightarrow \infty, \zeta \rightarrow \infty} \bigl\{\pi_{i,j}, \; (i,j) \in [\![m]\!] \times [\![n]\!]\; | \; \O{E}_{\mathcal U}\left[\mathbb{1}(\pi_{i,j} \geq \upsilon)\right] > 0 \bigr\}.
 \nonumber
\end{equation}
\item In the case of crisp marginals (i.e.\ no marginal uncertainties), $\mathbb{\Pi}_{\O{E}}$ contracts to a subset of  $\mathbb{\Pi}(\mu_{0}, \nu_{0})$ (\ref{eq:Kset}): 
\begin{eqnarray}
    \mathbb{\Pi}_{\O{E}}(\eta, \zeta, \upsilon) \xrightarrow{\eta \rightarrow 0, \zeta \rightarrow 0}&& \bigl\{\pi_{i,j}, \; (i,j) \in [\![m]\!] \times [\![n]\!] \; | \; \pi \in \mathbb{\Pi}(\mu_{0}, \nu_{0}) \; \text{and} \; \O{E}_{\O{S}^{o}}\left[\mathbb{1}(\pi_{i,j} \geq \upsilon)\right] > 0\bigr\} \nonumber\\
    &&\nonumber \\
    &&\subset \mathbb{\Pi}(\mu_{0}, \nu_{0}). \nonumber
\end{eqnarray}
\end{itemize}
We use $\mathbb{\Pi}_{\O{E}}(\eta, \zeta, \upsilon)$ to introduce our definition of fairness for contracts.  

\begin{definition}{\textbf{Fairness for contracts via diversified transport plans}}\label{def:fairness_def2} \\
    A random transport plan, $\pi\sim\O{S}^o (\pi | K )$, achieves fairness for contracts if it distributes the transport burden over all eligible contracts in $\mathbb{\Pi}_{\O{E}}(\eta, \zeta, \upsilon)$.
\end{definition}
For the purpose of illustration, we fix the optimal Kantorovitch potentials (\ref{eq:optKant}) to arbitrarily small values: $\lambda^o_{1}=\lambda^o_{2}=0.05$ (or, equivalently, large uncertainty radii $(\eta, \zeta)$), and the activation threshold to $\upsilon = 2\times 10^{-2}$. Both the base-level and hierarchical ideal designs (\ref{eq:joint_ideal_design}) are chosen to be uniform. We generate a sequence of relative frequency maps, each providing estimates of the probabilities that the respective contracts, $\pi_{i,j} \in \mathbb{\Pi}_{\O{E}}(\eta, \zeta, \upsilon)$, are active. We compare these to the base-level EOT matching policy (Figure \ref{fig:fig120}), which -- being oblivious to the uncertainty in the marginals -- yields a sparse transport policy and thus fails to achieve fairness for contracts (Definition \ref{def:fairness_def2}). In contrast, the random HFPD-OT matching policies enable a greater diversity by ensuring that more of the contracts are active, as shown in Figure \ref{fig:fig121}, \ref{fig:fig122} and \ref{fig:fig123}. These are the estimated activation probability maps, averaged over $N \in \{10, 50, 100\}$ randomly sampled transport plans,  $\pi^{(i)}\sim\O{S}^o (\pi | K )$, for $i \in [\![{N}]\!]$. As $N\rightarrow \infty$, these activation estimates converge to the ergodic limit, in which all eligible contracts have the same probability of being active.

\begin{figure}
    \centering
    \begin{subfigure}[b]{0.22\textwidth}
    \captionsetup{justification=centering}
        \includegraphics[width=\textwidth]{ 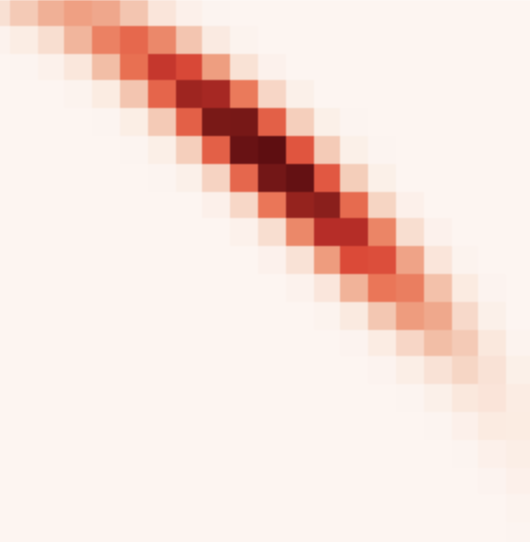}
        \caption{EOT plan computed between nominal marginals $\mu_{0}$ and $\nu_{0}$.}
        \label{fig:fig120}
    \end{subfigure} \hfill
    \begin{subfigure}[b]{0.22\textwidth}
    \captionsetup{justification=centering}
        \includegraphics[width=\textwidth]{ 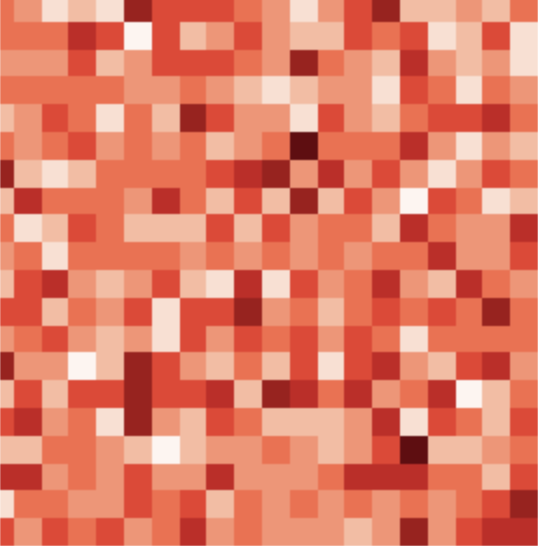}
        \caption{Estimated probabilities of active contracts,   $N=10$ realized plans.}
        \label{fig:fig121}
    \end{subfigure} \hfill
    \begin{subfigure}[b]{0.22\textwidth}
    \captionsetup{justification=centering}
        \includegraphics[width=\textwidth]{ 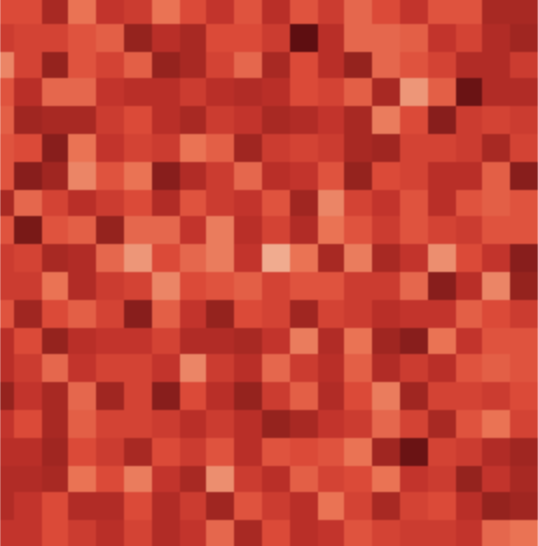}
        \caption{ Estimated probabilities of active contracts,   $N=50$ realized plans.}
        \label{fig:fig122}
    \end{subfigure} \hfill
    \begin{subfigure}[b]{0.285\textwidth}
    \captionsetup{justification=centering}
        \includegraphics[width=\textwidth]{ 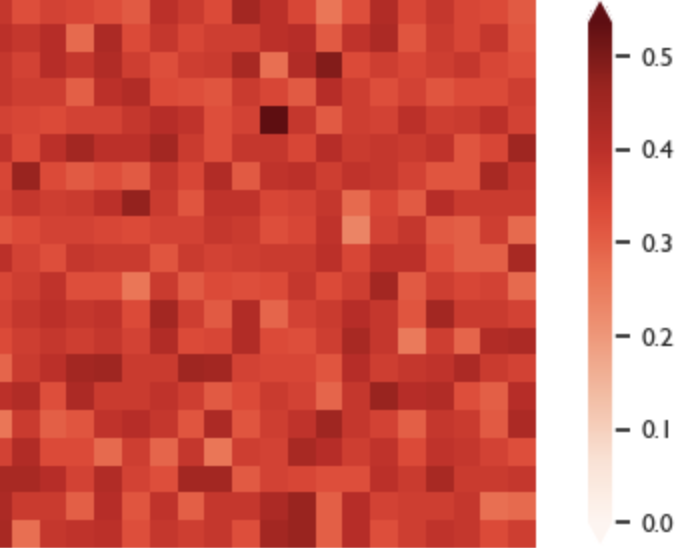}
        \caption{ Estimated probabilities of active contracts,   $N=100$ realized plans.}
        \label{fig:fig123}
    \end{subfigure}
    \caption{Comparison of the diversity of contracts induced by the conventional  base-level EOT solution vs HFPD-OT. \textbf{Figure (a)}: The base-level EOT plan, with the smoothness parameter fixed to $\epsilon = 10^{-3}$, induces a sparse policy, and therefore does not fairly distribute the  burden of transport across all eligible contracts, $\pi_{i,j}$. \textbf{Figures (b), (c), (d)}: random HFPD-OT policies induce  a long-term (i.e.\ ergodically) fair regime, where the  burden of transport is distributed across a larger set of contracts. Each entry in the relative frequency maps shows the estimated probability of activation of the corresponding contract, $\pi_{i,j}$. In the limit of $N\rightarrow \infty$  randomly realized matching policies, $\pi^{(i)}\sim\O{S}^o (\pi | K ) $, the map of estimated probabilities of active policies converges to a fair regime, where all eligible contracts  equally support the transport burden.}\label{fig:fig12}
\end{figure}

\begin{remark}\label{remark:randomness}
  Another way to appreciate fairness for contracts induced by randomized HFPD-OT plans is to study the {\em random\/} marginal cost,     $c_{i,j}$, associated with the contract $\pi_{i,j}$ (Figure \ref{fig:fig22}):
\begin{equation}
\label{eq:condcost}
    c_{i,j} \equiv \pi_{i, j} \O{C}_{i,j} \;\;, \;\; \pi \sim \mathsf{S}^{o}
\end{equation}
Recall that the  squared  2-Kantorovitch distance, 
\begin{equation}
\label{eq:2Wass}
\mathsf{KD}_{2}^{2}(\mu_{0}, \nu_{0}) \equiv\min_{\pi \in \mathbb{\Pi}(\mu_{0}, \nu_{0})}\sum_{i,j}\mathsf{C}_{i,j}\pi_{i,j},
\end{equation}
is the minimum expected transport cost between $\mu_{0}$ and $\nu_{0}$, for the Euclidean cost function, $\mathsf{C}$ (Section~\ref{sec:sim}) \citep{Villani2008OptimalTO}.
The base-level OT objective in \eqref{eq:2Wass} yields a fixed optimal solution, where the cost $c_{i,j}$ is immutable. Consequently,  the transport burden is supported by the same set of contracts. Let $\pi_{i_{0}, j_{0}}$ be one such contract where: 
\begin{equation}\label{eq:assymetric_burden_ot}
c_{i_{0}, j_{0}} >  \mathsf{KD}_{2}^{2}(\mu_{0}, \nu_{0}), \;\; (i_{0}, j_{0}) \in [\![m]\!] \times [\![n]\!]
\end{equation}
On the other hand, in HFPD-OT, and by virtue of the random nature of $c_{i_{0}, j_{0}}$, we can write the following Markov inequality: 
\begin{equation}\label{eq:markov}
\Pr \left[ c_{i_{0}, j_{0}} \geq  \mathsf{KD}_{2}^{2}(\mu_{0}, \nu_{0}) \right] \leq  \frac{\O{E}_{\pi \sim \mathsf{S}^{o}}\left[c_{i_{0}, j_{0}}\right]}{\mathsf{KD}_{2}^{2}(\mu_{0}, \nu_{0})}
\end{equation} 
Hence, this probability upper bound  depends on the ratio of the 
expected marginal transport cost associated with the contract, $\pi_{i_{0}, j_{0}}$ (\ref{eq:condcost}), to the squared 2-Kantorovitch distance between the nominal marginals \eqref{eq:2Wass}. Essentially, it provides an upper bound on the probability of a fairness-related proposition (Definition~\ref{def:fairness_def2}). Insights such as these may be used to establish operating conditions that are conducive to fairness. Such statistical handles on transport fairness are, of course, unavailable in conventional base-level OT.


\end{remark}

\section{Conclusions and next steps}\label{conclusions}
This paper recasts the optimal transport problem into a broader class of fully probabilistic design and generalized Bayesian inference techniques. In this new formalism, the transport plan is no longer regarded as a crisp, deterministic object, but is modeled as a random (i.e.\ uncertain) distribution in a hierarchical Bayesian setting. This is in clear contrast with the existing, certainty-equivalence-based OT paradigm. In this way, we augment the conventional base-level (i.e.\ deterministic) OT framework with the necessary tools to reason about uncertainty and design robust transport algorithms. In this new hierarchical setting, the object of interest is no longer the optimal transport plan, which may not even exist---since the marginals are themselves noisy, uncertain realizations of some underlying stochastic process---but is rather the optimal hyperprior, which is effectively a generative model over the set of transport plans. \\\\ 
We now recall  some key results on HFPD-OT, obtained in this paper:
\begin{itemize}
\item The functional form of the optimal hyperprior has been characterized in both the non-parametric and parametric settings. Importantly, we proved that the HFPD-OT setting is a generalization of the classical EOT in that the optimal transport plan can be recovered asymptotically when uncertainty in the marginals decreases.
\item Considering the parametric setting, we proposed an algorithm to approximate the Kantorovitch potentials and described some of the inferential properties of the hyperprior, highlighting its shape and location parameters. 
\item To illustrate the importance of HFPD-OT, we studied the problem of algorithmic fairness as it arises in fair market matching problems. First, we explored the role of randomization and diversification in eliciting fairer transport policies for agents, that is, for specific categories of workers and the companies which need their skills. Second, we investigated the role of randomization in eliciting fair matching policies for individual contracts between agents, by allowing the distribution of the transport burden across a larger set of contracts.
\end{itemize}
There remain important open questions to be studied and improvements to be implemented in subsequent work. The stochastic algorithms leveraged here enable a first approximation of the optimal hyperprior, but better samplers can be derived. Interestingly, sampling from the hyperprior may require new MCMC  techniques that leverage the geometry of the support of $\O{S}^o(\pi | K)$.  Moreover, the HFPD-OT application covered in this paper is on algorithmic fairness, however, we contend that the set of possible applications is broader: randomized policies play indeed an important role in  a diversity of problems related to generalizability and robustness in machine learning. 
Finally, a notable contribution of this paper has been to expand duality results from the classical setting in OT to the hierarchical framework of HFPD-OT. However, key theoretical results in  base-level deterministic OT---mainly those related to its geometry (\citep{Gangbo_McCann}, \citep{Villani2008OptimalTO}, {\em etc}.)---need  careful consideration within the extended framework of HFPD-OT.

\section*{Acknowledgement}
This work has been supported by the European Union’s Horizon Europe research and innovation programme, under grant agreement no.\ 101070568. It has also been supported by the European Union's competitive HORIZON-MSCA-2021-DN-01 (Marie Sklodowska-Curie Doctoral Networks) programme, under grant agreement no.\ 101073508. The authors also acknowledge the support of Innovate UK in underwriting both of the above grants, under the Horizon Europe Guarantee.

\section{Appendix: proof of strong duality in 
Theorem~\ref{main_theorem} (step 3)} \label{appendix}
The following  additional mathematical definitions are required, supplementing the preliminaries in Section~\ref{conventions}. 
\begin{itemize}
\item Besides being compact, we assume henceforth that $\mathbb{\Omega}_{X}$ and $\mathbb{\Omega}_{Y}$ are Hausdorff sets. This separability property guarantees uniqueness of limits and sequences. 
\item From compactness of $\mathbb{\Omega}_{X}$ and $\mathbb{\Omega}_{Y}$, it follows, by the Riesz-Markov-Kakutani Theorem \citep{folland}, that the topological dual of $\mathbb{C}(\mathbb{\Omega}_{X}\times\mathbb{\Omega}_{Y})$---the set of continuous functions on $\mathbb{\Omega}_{X}\times\mathbb{\Omega}_{Y}$---is the set of Radon measures with support in $\mathbb{\Omega}_{X} \times \mathbb{\Omega}_{Y}$. This also implies that $\mathbb{C}(\mathbb{\Omega}_{X}\times\mathbb{\Omega}_{Y})$ is a Banach space. Thus, by the Banach-Alaoglu Theorem, $\mathbbm{P}(\mathbb{\Omega}_{X}\times\mathbb{\Omega}_{Y})$ is compact in the weak-* topology \citep{Bill86}.
\item The previous compactness result allows us to again invoke the Riesz-Markov-Kakutani representation Theorem, which states that the topological dual of $\mathbb{C}(\mathbbm{P}(\mathbb{\Omega}_{X} \times \mathbb{\Omega}_{Y}))$ is the hierarchical space of Radon measures with support in $\mathbbm{P}(\mathbb{\Omega}_{X} \times \mathbb{\Omega}_{Y})$. We denote this dual space by $\mathbb{S}$.
The canonical duality pairing reads as follows \citep{folland}: \begin{equation}
    <f, \mathsf{S}> \equiv\int_{\mathbbm{P}(\mathbb{\Omega}_{X} \times \mathbb{\Omega}_{Y})} \hspace*{-1.1cm} fd\mathsf{S}
\end{equation} 
with $f \in \mathbb{C}(\mathbbm{P}(\mathbb{\Omega}_{X} \times \mathbb{\Omega}_{Y}))$ and $\mathsf{S} \in \mathbb{S}$. Later in the proof, we will constrain $\mathbb{S}$ to the set of hierarchical (probability) distributions.

\item If $\mathsf{O}: \mathbb{S} \rightarrow \mathbbm{R}^{p}$ is a linear map, its adjoint is defined as:  $\mathsf{O}^{*} \colon \mathbbm{R}^{p} \rightarrow \mathbb{C}(\mathbb{P}(\mathbb{\Omega}_{X} \times \mathbb{\Omega}_{Y}))$ such that: \begin{equation}\label{eq:linear_map} <\mathsf{O}(\mathsf{S}), z> = <\mathsf{S}, \mathsf{O}^{*}(z)> \end{equation} for $\mathsf{S} \in \mathbb{S}$ and $z \in \mathbb{R}^{p}$.

\item $f^{*}$ denotes the Legendre-Fenchel transform of $f$ defined in $\mathbb{C}(\mathbbm{P}(\mathbb{\Omega}_{X} \times \mathbb{\Omega}_{Y}))$. It is given by:
\begin{equation}\label{eq:lf_transform}
    f^{*}(u) \equiv\sup_{v \in \mathbb{C}(\mathbbm{P}(\mathbb{\Omega}_{X} \times \mathbb{\Omega}_{Y}))}(<u, v> - f(v))
\end{equation}
\item $\O{dom}(h)$ denotes the effective domain of the function $h \in \mathbb{C}(\mathbb{P}(\mathbb{\Omega}_{X}\times \mathbb{\Omega}_{Y}))$, defined as: $\O{dom}(h) \equiv\{\pi \in \mathbb{P}(\mathbb{\Omega}_{X}\times \mathbb{\Omega}_{Y}) \; | \; h(\pi) < \infty\}$.
\item Our proof relies on the notion of \textit{decomposable} spaces, as originally defined in Theorem 1 of \citep{rockafellar_integrals}. A space is decomposable if it is stable under bounded alterations over sets of finite measure. 
\item Let $\mathbb{L}(\mathbb{P}(\mathbb{\Omega}_{X}\times \mathbb{\Omega}_{Y}))$ denote the set of integrable functions, defined in $\mathbb{P}(\mathbb{\Omega}_{X}\times \mathbb{\Omega}_{Y})$. $\mathbb{L}(\mathbb{P}(\mathbb{\Omega}_{X}\times \mathbb{\Omega}_{Y}))$ is decomposable, since it satisfies the following conditions \citep{rockafellar_integrals}:
\begin{itemize}
\item $\mathbb{L}(\mathbb{P}(\mathbb{\Omega}_{X}\times \mathbb{\Omega}_{Y}))$ contains every bounded and measurable functions defined on $\mathbb{P}(\mathbb{\Omega}_{X}\times \mathbb{\Omega}_{Y})$.
\item If $h \in \mathbb{L}(\mathbb{P}(\mathbb{\Omega}_{X}\times \mathbb{\Omega}_{Y}))$ and $\mathbb{l} \in \mathcal{F}_{\mathbb{\Omega}_{\O{H}}}$ is an arbitrary set of finite measure in $\mathbb{P}(\mathbb{\Omega}_{X}\times \mathbb{\Omega}_{Y})$ \eqref{HFPD-OT}, then $\mathbb{L}(\mathbb{P}(\mathbb{\Omega}_{X}\times \mathbb{\Omega}_{Y}))$ contains $\chi_{\mathbb{l}} \boldsymbol{\cdot} h$, where $\boldsymbol{\cdot}$ denotes the dot product between the indicator function $\chi_{\mathbb{l}}$ of $\mathbb{l}$ and the function $h$.
\end{itemize}
\item The characteristic function of a (convex) set $\mathbb{A}$ is the convex function: \[
\mathbb{1}_{\mathbb{A}}(x) \equiv
\left\{
\begin{aligned}
0  & \;\; \text{if} \;\; x \in \mathbb{A},  \\
+ \infty & \;\; \text{otherwise.} \\
\end{aligned} \right.
\]
\end{itemize}
For the sake of completeness, we recall the main duality Theorem \citep{TyrrellRockafellar1974ConjugateDA} in the general setting, before specializing it to our problem later in the proof: 
\begin{theorem}[Fenchel-Rockafellar]\label{theo:fenchel_rockaff}
Let $(E, E^{*})$ and $(F, F^{*})$ be two topologically paired spaces. Let $\O{O} \colon E \rightarrow F$ be a continuous linear operator and $\O{O}^{*} \colon F^{*} \rightarrow E^{*}$ its adjoint. Let f and g be two lower semi-continuous and proper convex functions defined on E and F, respectively. If the following qualification condition is satisfied:
$\exists \; y^{*} \in dom(g^{*})$ \; s.t. \; $f^{*}$ is continuous at $\O{O}^{*}(y^{*})$,
then:
\begin{equation}\label{eq:strong_duality_theorem}
\max_{x \in E} -f(-x) - g(\O{O}(x)) = \min_{y^{*} \in F^{*}} f^{*}(\O{O}^{*}(y^{*})) + g^{*}(y^{*})
\end{equation}
\end{theorem}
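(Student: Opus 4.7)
The plan is to address the three parts of Theorem~\ref{main_theorem} sequentially, following the information-processing scheme in Figure~\ref{fig:fig3}: (1) a KLD chain-rule reformulation that folds the base ideal $\pi_{\O{I}}$ into the hierarchical ideal to produce $\tilde{\O{S}}$; (2) a Lagrangian-plus-Gibbs variational argument to extract the explicit Gibbs form of $\O{S}^o$; and (3) an application of the Fenchel-Rockafellar theorem in the topological setting of Appendix~\ref{appendix} to establish strong duality and characterise the Kantorovitch potentials.

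For part 1, the key identity is the KLD chain rule applied to $\O{M}=\pi\O{S}$ and $\O{M}_{\O{I}}=\pi_{\O{I}}\O{S}_{\O{I}}$:
\[
\mathsf{D}_{\mathsf{KL}}(\O{M}||\O{M}_{\O{I}}) = \mathsf{D}_{\mathsf{KL}}(\O{S}||\O{S}_{\O{I}}) + \O{E}_{\O{S}}\bigl[\mathsf{D}_{\mathsf{KL}}(\pi||\pi_{\O{I}})\bigr].
\]
Recognising the right-hand side as $\mathsf{D}_{\mathsf{KL}}(\O{S}||\tilde{\O{S}})$ up to an $\O{S}$-free log-partition constant, where $\tilde{\O{S}}\propto \O{S}_{\O{I}}\exp(-\mathsf{D}_{\mathsf{KL}}(\pi||\pi_{\O{I}}))$ as in \eqref{eq:new_ideal_design}, absorbs the base ideal into the hierarchical ideal. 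Pairing $\tilde{\O{S}}$ with its induced base marginal $\hat{\pi}_{\tilde{\O{S}}}$ via the consistency identity \eqref{eq:expplan2} then produces the target joint ideal in \eqref{eq:new_primal}; this is exactly the content of Theorem~1 in \citep{QUINN2016532}.

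For part 2, I would form the Lagrangian associated with the reduced KL objective $\mathsf{D}_{\mathsf{KL}}(\O{S}||\tilde{\O{S}})$ together with the two KLD-ball constraints \eqref{eq:moment_constraints1}--\eqref{eq:moment_constraints2}, which are linear functionals of $\O{S}$. Setting the first variation to zero (with a normalisation multiplier) yields
\[
\O{S}^o(\pi|K)\propto \tilde{\O{S}}(\pi|K)\exp\bigl(-\lambda_1^o\mathsf{D}_{\mathsf{KL}}(\mu||\mu_0)-\lambda_2^o\mathsf{D}_{\mathsf{KL}}(\nu||\nu_0)\bigr),
\]
which is exactly \eqref{eq:hyperprior}. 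The Slater-like choice $\O{S}=\delta_{\mu_0\otimes\nu_0}$ vanishes both constraints and hence certifies feasibility, while strict convexity of the KLD in its first argument delivers $\mathcal{L}$-a.e.\ uniqueness.

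The main obstacle is part 3, since strong duality must be established in the infinite-dimensional dual pair $\bigl(\mathbb{C}(\mathbbm{P}(\mathbb{\Omega}_{X}\times\mathbb{\Omega}_{Y})),\mathbb{S}\bigr)$ assembled in Appendix~\ref{appendix}. I would instantiate Theorem~\ref{theo:fenchel_rockaff} with $E=\mathbb{S}$, $F=\mathbbm{R}^2$, continuous linear operator $\O{O}(\O{S})\equiv\O{E}_{\O{S}}[\O{R}(\pi)]$, functional $f(\O{S})\equiv\mathsf{D}_{\mathsf{KL}}(\O{S}||\tilde{\O{S}})+\iota_{\mathbb{S}}(\O{S})$, and constraint indicator $g(y)\equiv\iota_{\{y\preceq\boldsymbol{\theta}\}}(y)$. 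The technical work is verifying the hypotheses of Theorem~\ref{theo:fenchel_rockaff}: (a) lower semicontinuity, properness and convexity of $f$ in the weak-* topology, which follow from the Donsker--Varadhan variational representation of the KLD combined with Banach--Alaoglu compactness of $\mathbbm{P}(\mathbb{\Omega}_{X}\times\mathbb{\Omega}_{Y})$; (b) continuity of $\O{O}$ and identification of its adjoint $\O{O}^*$ via Riesz--Markov--Kakutani; and most delicately (c) the qualification condition, which I would discharge by taking $\boldsymbol{y}^*=\boldsymbol{0}\in\O{dom}(g^*)$ and showing $f^*$ is continuous at $\O{O}^*(\boldsymbol{0})$ using the interchange of integration and minimisation permitted by decomposability of $\mathbb{L}(\mathbbm{P}(\mathbb{\Omega}_{X}\times\mathbb{\Omega}_{Y}))$ in the sense of Rockafellar. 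Once the hypotheses are in hand, computing the Legendre--Fenchel conjugates yields $f^*(\O{O}^*(\boldsymbol{\lambda}))=-\log\O{N}(\boldsymbol{\lambda})$ with $\O{N}$ as in \eqref{normalization}, and $g^*(\boldsymbol{\lambda})=\boldsymbol{\lambda}^{\intercal}\boldsymbol{\theta}+\iota_{\{\boldsymbol{\lambda}\succeq 0\}}$; the Fenchel--Rockafellar equality delivers $\min(P)=\max(D)$, and KKT stationarity at the optimum identifies the maximiser $\boldsymbol{\lambda}^o$ of \eqref{eq:optKant} with the Kantorovitch potentials appearing in \eqref{eq:hyperprior}.
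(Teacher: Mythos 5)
Your proposal does not address the statement you were asked to prove. The statement is Theorem~\ref{theo:fenchel_rockaff}, the classical Fenchel--Rockafellar duality theorem itself, stated abstractly for topologically paired spaces $(E,E^{*})$ and $(F,F^{*})$, a continuous linear operator $\O{O}$, and lower semicontinuous proper convex $f$ and $g$. The paper does not prove this theorem: it recalls it from \citep{TyrrellRockafellar1974ConjugateDA} ``for the sake of completeness'' and then \emph{applies} it in the Appendix to obtain strong duality for the HFPD-OT primal. What you have written is instead a proof sketch of Theorem~\ref{main_theorem}, the paper's main result. With respect to the actual target your argument is moreover circular: in your part 3 you explicitly ``instantiate Theorem~\ref{theo:fenchel_rockaff}'' with $E=\mathbb{S}$, $F=\mathbbm{R}^{2}$, and so on---that is, you assume the very statement under consideration rather than proving it.

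A proof of Theorem~\ref{theo:fenchel_rockaff} would need two ingredients, neither of which appears in your proposal. First, weak duality: for all $x\in E$ and $y^{*}\in F^{*}$, the Young--Fenchel inequalities $-f(-x)\leq f^{*}(\O{O}^{*}(y^{*}))+<-x,\O{O}^{*}(y^{*})>$ and $-g(\O{O}(x))\leq g^{*}(y^{*})-<\O{O}(x),y^{*}>$ combine, via the adjoint identity $<\O{O}(x),y^{*}>=<x,\O{O}^{*}(y^{*})>$, to give $-f(-x)-g(\O{O}(x))\leq f^{*}(\O{O}^{*}(y^{*}))+g^{*}(y^{*})$ and hence $\sup(\mathrm{LHS})\leq\inf(\mathrm{RHS})$. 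Second, and essentially, the reverse inequality together with attainment of the maximum on the left: this is obtained by a Hahn--Banach separation argument applied to the epigraphs of the two convex functions $y^{*}\mapsto f^{*}(\O{O}^{*}(y^{*}))$ and $y^{*}\mapsto -g^{*}(y^{*})$ shifted by the candidate optimal value (equivalently, to the perturbation function of the dual problem). The qualification condition---continuity of $f^{*}$ at $\O{O}^{*}(y^{*})$ for some $y^{*}\in\O{dom}(g^{*})$---is precisely what guarantees that one of these convex sets has nonempty interior, so that the separating functional is non-vertical and can be identified with an element $x\in E$ achieving the supremum. None of this machinery is present in your write-up. The material you did produce (the KLD chain rule, the Gibbs variational form of $\O{S}^{o}$, and the verification of the hypotheses in the HFPD-OT instance) belongs to the proof of Theorem~\ref{main_theorem} and broadly matches the paper's own Appendix argument, but it does not constitute a proof of the Fenchel--Rockafellar theorem.
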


\begin{proof}
Let's consider the Primal problem in \eqref{eq:primal}. Using Fubini's Theorem and the Bayesian hierarchical modelling consistency condition stated in \eqref{eq:expected_plan}, it is easy to show that this original problem can be formulated equivalently, over the set of hyperpriors $\mathbb{S}$, as follows:
\[
    (P): \;\;\;\;\; \O{S}^{o} \in \argmin_{\O{S} \in \mathbb{S}} \left\{\mathsf{D}_{\mathsf{KL}}\bigl(\mathsf{S}(\pi|K)|| \tilde{\mathsf{S}}(\pi|K)\bigr) \right\}
\]
subject to:
\[\left\{
\begin{aligned}
    \O{E}_{\mathsf{S}}(\mathsf{D}_{\mathsf{KL}}(\mu||\mu_{0})) \leq \eta  \\
 \O{E}_{\mathsf{S}}(\mathsf{D}_{\mathsf{KL}}(\nu||\nu_{0})) \leq \zeta 
\end{aligned} \right.\]
where $\tilde{\O{{S}}}$ is defined in  \eqref{eq:new_ideal_design}.
The constraints involve the following linear map:
\[
\mathsf{I}(\mathsf{S}) \equiv\int_{\mathbbm{P}(\mathbb{\Omega}_{X} \times \mathbb{\Omega}_{Y})} \mathsf{S}(\pi|K)d\mathcal{L}(\pi)
\]
besides our usual moment constraints: 
\[
 \mathsf{O}_{1}(\mathsf{S}) \equiv
      \O{E}_{\mathsf{S}}(\mathsf{D}_{\mathsf{KL}}(\mu||\mu_{0})) \;\;\; ,  \;\; \mathsf{O}_{2}(\mathsf{S}) \equiv
      \O{E}_{\mathsf{S}}(\mathsf{D}_{\mathsf{KL}}(\nu||\nu_{0}))
\]

For convenience, we denote by $\bar{\mathsf{O}}$ the linear map given by: \begin{equation}\label{eq:linear_map}
    \bar{\mathsf{O}}(\mathsf{S}) \equiv(\mathsf{I}(\mathsf{S}), \mathsf{O}_{1}(\mathsf{S}), \mathsf{O}_{2}(\mathsf{S})) \in \mathbbm{R}^{3}
\end{equation}
As usual, we can use the characteristic function to encode the constraints directly in the objective $(P)$ , yielding the following equivalent unconstrained problem:
\begin{equation}\label{eq:unconstrained_primal}
(P^{\prime}): \;\;\;  \O{S}^{o} \in \argmin_{\O{S}} \left\{\mathsf{D}_{\mathsf{KL}}\bigl(\mathsf{S}(\pi|K)|| \tilde{\mathsf{S}}(\pi|K)\bigr) + g_{0}(\bar{\mathsf{O}}(\mathsf{S})) \right\}
\end{equation}
where we define $g_{0}$ as follows:
\[
g_{0}(z_{0}, z_{1}, z_{2}) \equiv\mathbb{1}_{[0, \eta]}(z_{0}) + \mathbb{1}_{[0, \zeta]}(z_{1}) + \mathbb{1}_{\{1\}}(z_{2}) \;\; , \;\; (z_{0}, z_{1}, z_{2}) \in \mathbbm{R}^{3}
\]
We begin by deriving the Legendre-Fenchel dual of $\bar{\O{O}}(\cdot)$, $g_{0}(\cdot)$ and $\O{D}_{\O{KL}}(\cdot||\cdot)$, respectively.
By the definition of the adjoint in \eqref{eq:linear_map}, it is straightforward to show that $\mathsf{\bar{O}}^{*}$ is given by:
\[
\mathsf{\bar{O}}^{*}(\lambda_{1}, \lambda_{2}, \lambda_{3}) = \lambda_{1} \mathsf{D}_{\mathsf{KL}}(\mu||\mu_{0}) + \lambda_{2} \mathsf{D}_{\mathsf{KL}}(\nu||\nu_{0}) + \lambda_{3} \;\; , \;\; (\lambda_{1}, \lambda_{2}, \lambda_{3}) \in \mathbbm{R}^{3}
\]
Moreover, applying the definition of Legendre-Fenchel transform \eqref{eq:lf_transform} yields the following conjugate of $g_{0}$:
\[
\begin{split}
g_{0}^{*}(\lambda_{1}, \lambda_{2}, \lambda_{3}) = \lambda_{1}\eta + \lambda_{2} \zeta + \lambda_{3}
\end{split}
\]
We now turn our attention to the conjugate of $\mathsf{D}_{\mathsf{KL}}(\cdot||\cdot)$. To this aim, we first consider the following integral functional \citep{rockafellar_integrals}: 
\begin{align}
    \mathcal{I}_{\O{f}}(u)\colon \mathbb{C}
    (\mathbbm{P}(\mathbb{\Omega}_{X} \times \mathbb{\Omega}_{Y})) & \longrightarrow \mathbbm{R} \\
    u & \longmapsto \int_{\mathbbm{P}(\mathbb{\Omega}_{X} \times \mathbb{\Omega}_{Y})} \O{f}(\pi, u(\pi))d\mathcal{L}(\pi)
\end{align}
where: 
\[
\O{f}(\pi, u(\pi)) \equiv\tilde{\O{S}}(\pi|K) \exp\bigl(u(\pi) -1\bigr)
\]
$\O{f}(\pi, \cdot)$ is clearly an integrable, proper and convex function. As we saw earlier, the space $\mathbb{L}
    (\mathbbm{P}(\mathbb{\Omega}_{X} \times \mathbb{\Omega}_{Y}))$ is decomposable. Therefore, by Theorem 2 in \citep{rockafellar_integrals}, we can perform the Legendre-Fenchel transform of $\mathcal{I}_{\O{f}}$ through the integral sign and write:
\begin{equation}  \label{LF_Integral}
      \mathcal{I}^{*}_{\O{f}}(u) 
      = \mathcal{I}_{\O{f}^{*}}(\mathsf{S}) \equiv \int_{\mathbbm{P}(\mathbb{\Omega}_{X} \times \mathbb{\Omega}_{Y})} \O{f}^{*}(\pi, u(\pi))d\mathcal{L}(\pi)
\end{equation}
$\O{f}^{*}$ is obtained using again the definition of the Fenchel-Rockafellar transform \eqref{eq:lf_transform}:
\[
\begin{split}
\O{f}^{*}(\pi, \mathsf{S}(\pi|K)) & \equiv\sup_{u \in \mathbb{C}(\mathbbm{P}(\mathbb{\Omega}_{X} \times \mathbb{\Omega}_{Y}))}\bigl\{u(\pi)\mathsf{S}(\pi|K) - \O{f}(\pi, u(\pi))\bigr\} \\ &= \mathsf{S}(\pi|K)\log\Bigl(\frac{\mathsf{S}(\pi|K)}{\tilde{\O{S}}(\pi|K)}\Bigr)
\end{split}
\]
It follows that:
\[
\mathcal{I}^{*}_{\O{f}}(x) = \O{D}_{\O{KL}}(\mathsf{S}||\tilde{\mathsf{S}})
\]
There exists at least one hyperprior $\O{S} \in \mathbb{S}$ s.t. $\O{f}^{*}$ is an integrable function of $\pi$ (consider for instance $\O{S} = \O{\tilde{S}}$). It follows, by Theorem 1 in \citep{rockafellar_integrals}, that $\mathcal{I}_{\O{f}}$ is a well-defined convex functional. Thus, the conjugacy operator acts as an involution on $\mathcal{I}_{\O{f}}$, yielding:
\[
\O{D}_{\O{KL}}^{*} = \mathcal{I}_{\O{f}}^{**} =  \mathcal{I}_{\O{f}}
\]
Going back to our main Theorem in \eqref{theo:fenchel_rockaff}, it is obvious that $\O{D}_{\mathsf{KL}}(\cdot||\cdot)$ and $g_{0}(\cdot)$ are lower semicontinuous, proper and convex. Furthermore, $\O{D}_{\mathsf{KL}}^{*}(\cdot||\cdot)$ is continuous everywhere \textit{w.r.t} the uniform norm (Theorem 4 in \citep{TyrrellRockafellar1974ConjugateDA}). It follows that strong duality holds and that the primal and dual problems are equal, the dual reading as follows:
\begin{equation}\label{dual_final}
\begin{split}
    (D): \;\;\;\; \sup_{(\lambda_{1}, \lambda_{2}, \lambda_{3}) \in \mathbbm{R}^{3}} \left\{-\int_{\mathbbm{P}(\mathbb{\Omega}_{X} \times \mathbb{\Omega}_{Y})} \tilde{\O{S}}(\pi|K) \exp\bigl(\bar{\O{O}}^{*}(-\lambda_{1}, -\lambda_{2}, -\lambda_{3})-1\bigr)d\mathcal{L}(\pi) -\lambda_{1}\eta -\lambda_{2}\zeta - \lambda_{3} \right\}
\end{split}
\end{equation}

One can simplify further the previous result by maximizing \eqref{dual_final} \textit{w.r.t} $\lambda_{3}$ for fixed $(\lambda_{1}, \lambda_{2})$, yielding the following value for $\lambda^{o}_{3}$:
\begin{equation}
\lambda^{o}_{3} = \log\left(\int_{\mathbbm{P}(\mathbb{\Omega}_{X} \times \mathbb{\Omega}_{Y})} \tilde{\O{S}}(\pi|K) \exp\left(-\lambda_{1} \mathsf{D}_{\mathsf{KL}}(\mu||\mu_{0}) -\lambda_{2} \mathsf{D}_{\mathsf{KL}}(\nu||\nu_{0})-1\right)d\mathcal{L}(\pi)\right)
\end{equation}
By substituting $\lambda^{o}_{3}$ back into \eqref{dual_final}, we obtain \eqref{eq:dual}. \\\\
The optimality condition: 
\[
0 \in \partial \O{D}_{\O{KL}}(\mathsf{S}(\pi|K)) + \partial{g_{0}}\bigl(\bar{O}(\mathsf{S}\bigr))
\]
implies that the primal and dual optimal solutions should satisfy the following extremality conditions  \citep{pjm_1102992608}: 
\[\left\{
\begin{aligned}
        \mathsf{S}^{o} \in \partial{\mathcal{I}_{\O{f}}(-\O{O}^{*}(\lambda_{1}, \lambda_{2}))} \;\;\; \mathcal{L}-a.e. \\
         (-\lambda_{1}^{o},  -\lambda_{2}^{o}) \in \partial{g_{0}\bigl(\O{O}_{1}(\O{S}), \O{O}_{2}(\O{S}) \bigr)}
\end{aligned} \right.\]
$\mathcal{I}_{\O{f}}$ being differentiable everywhere, its sub-differential reduces to the usual gradient, leading to the same optimal hyperprior derived earlier using information processing arguments \eqref{eq:hyperprior}:
\[
\mathsf{S}^{o} \propto  \exp\left(-\lambda_{1}^{o} \mathsf{D}_{\mathsf{KL}}(\mu||\mu_{0})\right) \tilde{\O{S}}(\pi|K) \exp\left(-\lambda_{2}^{o} \mathsf{D}_{\mathsf{KL}}(\nu||\nu_{0})\right) \;\;\; \mathcal{L}-a.e.
\]
On the other hand, noting that the sub-differential of the indicator function $g_{0}$ is the normal cone $\bar{\mathsf{N}}_{\mathbb{Q}}\bigl(\O{O}_{1}(\O{S}), \O{O}_{2}(\O{S})\bigr)$, defined as follows:
\begin{equation}\label{diff_two}
    \bar{\mathsf{N}}_{\mathbb{Q}}\bigl(\O{O}_{1}(\O{S}), \O{O}_{2}(\O{S})\bigr) \equiv\left\{v \in \mathbbm{R}^{2} \;\;|\;\; v^{\intercal}\left[\boldsymbol{x} - \begin{bmatrix} \O{O}_{1}(\O{S}) \\  \O{O}_{2}(\O{S}) \end{bmatrix}\right] \preceq 0, \forall \boldsymbol{x} \in [0, \eta] \times [0, \zeta]\right\}
\end{equation}
the following optimality conditions are obtained, for the special choice of $\boldsymbol{x} = (\eta, \zeta)$ plugged in \eqref{diff_two}:
\[\left\{
\begin{aligned}
        \lambda_{1}^{o}\Bigl(\eta - \O{E}_{\mathsf{S}}\bigl(\mathsf{D}_{\mathsf{KL}}(\mu||\mu_{0}\bigr)\Bigr) \geq 0 \\
        \lambda_{2}^{o}\Bigl(\zeta - \O{E}_{\mathsf{S}}\bigl(\mathsf{D}_{\mathsf{KL}}(\nu||\nu_{0}\bigr)\Bigr) \geq 0
\end{aligned} \right.\]
Thus:
$\boldsymbol{\lambda} \equiv(\lambda_{1}^{o}, \lambda_{2}^{o}) \succeq 0$. 
\end{proof}

\bibliography{references.bib}
\end{document}